\newcommand{\bs}{\boldsymbol}
\newcommand{\st}{{\rm st}}
\newcommand{\rmd}{{\rm d}}
\newcommand{\sumi}{\Sigma_i}
\newcommand{\sumip}{\Sigma_{i'}}
\newcommand{\sumipp}{\Sigma_{i''}}
\newcommand{\sumj}{\Sigma_j}
\newcommand{\sumjp}{\Sigma_{j'}}
\newcommand{\sumk}{\Sigma_k}
\newtheorem{theorem}{Theorem}
\newtheorem{definition}{Definition}
\newtheorem{corollary}{Corollary}
\gdef\@copyrightpermission{
  \begin{minipage}{0.2\columnwidth}
   \href{https://creativecommons.org/licenses/by/4.0/}{\includegraphics[width=0.90\textwidth]{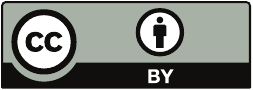}}
  \end{minipage}\hfill
  \begin{minipage}{0.8\columnwidth}
   \href{https://creativecommons.org/licenses/by/4.0/}{This work is licensed under a Creative Commons Attribution International 4.0 License.}
  \end{minipage}
  \vspace{5pt}
}
\title[Global Behavior of Learning Dynamics in Zero-Sum Games with Memory Asymmetry]{Global Behavior of Learning Dynamics in Zero-Sum Games\\ with Memory Asymmetry}
\author{Yuma Fujimoto}
\affiliation{
  \institution{CyberAgent}\country{}
  \institution{University of Tokyo}\country{}
  \institution{Soken University}\country{}
  }
\email{fujimoto.yuma1991@gmail.com}
\author{Kaito Ariu}
\affiliation{
  \institution{CyberAgent}\country{}
  }
\email{kaito_ariu@cyberagent.co.jp}
\author{Kenshi Abe}
\affiliation{
  \institution{CyberAgent}\country{}
  }
\email{abekenshi1224@gmail.com}
\begin{abstract}
This study examines the global behavior of dynamics in learning in games between two players, X and Y. We consider the simplest situation for memory asymmetry between two players: X memorizes the other Y's previous action and uses reactive strategies, while Y has no memory. Although this memory complicates their learning dynamics, we characterize the global behavior of such complex dynamics by discovering and analyzing two novel quantities. One is an extended Kullback-Leibler divergence from the Nash equilibrium, a well-known conserved quantity from previous studies. The other is a family of Lyapunov functions of X's reactive strategy. One of the global behaviors we capture is that if X exploits Y, then their strategies converge to the Nash equilibrium. Another is that if Y's strategy is out of equilibrium, then X becomes more exploitative with time. Consequently, we suggest global convergence to the Nash equilibrium from both aspects of theory and experiment. This study provides a novel characterization of the global behavior in learning in games through a couple of indicators.
\end{abstract}
\keywords{Multi-Agent Learning, Zero-Sum Game, Dynamical Systems, Lyapunov Function}
\newcommand{\BibTeX}{\rm B\kern-.05em{\sc i\kern-.025em b}\kern-.08em\TeX}
\begin{document}

%%% The following commands remove the headers in your paper. For final 
%%% papers, these will be inserted during the pagination process.

\pagestyle{fancy}
\fancyhead{}

%%% The next command prints the information defined in the preamble.

\maketitle 

%%%%%%%%%%%%%%%%%%%%%%%%%%%%%%%%%%%%%%%%%%%%%%%%%%%%%%%%%%%%%%%%%%%%%%%%

\section{Introduction}
Learning in games targets how multiple agents learn their optimal strategies in the repetition of games~\cite{fudenberg1998theory}. The set of such players' best strategies is defined as Nash equilibrium~\cite{nash1950equilibrium}, where every player has no motivation to change his/her strategy. However, this equilibrium is hard to compute in general because one's best strategy depends on the others' strategies. Indeed, the behavior of multi-agent learning is complicated in zero-sum games, where players conflict in their payoffs. Even when players try to learn their optimal strategies there, their strategies often cycle around the Nash equilibrium and fail to converge to the equilibrium.

In order to understand such strange behaviors, which are unique in multi-agent learning, the dynamics of how multiple agents learn their strategies, say, learning dynamics, are frequently studied~\cite{tuyls2006evolutionary, busoniu2008comprehensive, tuyls2012multiagent, bloembergen2015evolutionary}. The representative dynamics of interest are the replicator dynamics, which is based on the evolutionary dynamics in biology~\cite{taylor1978evolutionary, friedman1991evolutionary, borgers1997learning, hofbauer1998evolutionary, nowak2004evolutionary}. These dynamics are also known as the multiplicative weight updates (MWU) in its discrete-time version~\cite{arora2012multiplicative, bailey2018multiplicative}. Furthermore, their connection to other representative learning dynamics, such as gradient ascent~\cite{singh2000nash, bowling2002multiagent, zinkevich2003online, bowling2004convergence} and Q-learning~\cite{watkins1992q, hussain2023asymptotic, hussain2023beyond}, should be noted. Such replicator dynamics are known to be characterized by Kullbuck-Leibler (KL) divergence, which is the distance from the Nash equilibrium to the players' present strategies. This KL divergence is conserved during the learning dynamics, and the distance from the Nash equilibrium is invariant~\cite{piliouras2014optimization, piliouras2014persistent}. Follow the Regularized Leader (FTRL) is a class of learning algorithms including the replicator dynamics and also has its conserved quantity, which is the summation of divergences for all the players~\cite{mertikopoulos2016learning, mertikopoulos2018cycles}. To summarize, such complex learning dynamics have been discussed based on their conserved quantity.

In this study, we define memory as an agent's ability to change its action choice depending on the outcome of past games. By definition, this memory allows the agent to make more complex and intelligent decisions. When memory is introduced into a normal-form game, the players can achieve a wider range of strategies as the Nash equilibria (known as Folk theorem~\cite{fudenberg2009folk}). Furthermore, memory is also introduced into learning algorithms, such as replicator dynamics~\cite{fujimoto2019emergence, fujimoto2021exploitation, fujimoto2023learning, fujimoto2024memory} and Q-learning~\cite{masuda2009theoretical, barfuss2020reinforcement, usui2021symmetric, meylahn2022limiting, ueda2023memory}. Here, since this memory causes feedback from the past, the global dynamics of such learning algorithms become more complex. Indeed, replicator dynamics diverge from the Nash equilibrium under symmetric memory lengths between players~\cite{fujimoto2023learning}, while converging under asymmetric memory lengths~\cite{fujimoto2024memory}. Here, KL divergence is no longer useful to capture the global dynamics because it increases or decreases over time. The analysis of the dynamics in with-memory games is limited to the local, linearized stability analysis in the vicinity of Nash equilibrium~\cite{fujimoto2024memory}. To summarize, since memory crucially complicates learning dynamics, the global behavior of the dynamics is still unexplored.

% Figure 01
\begin{figure*}[tb]
    \centering
    \includegraphics[width=0.6\hsize]{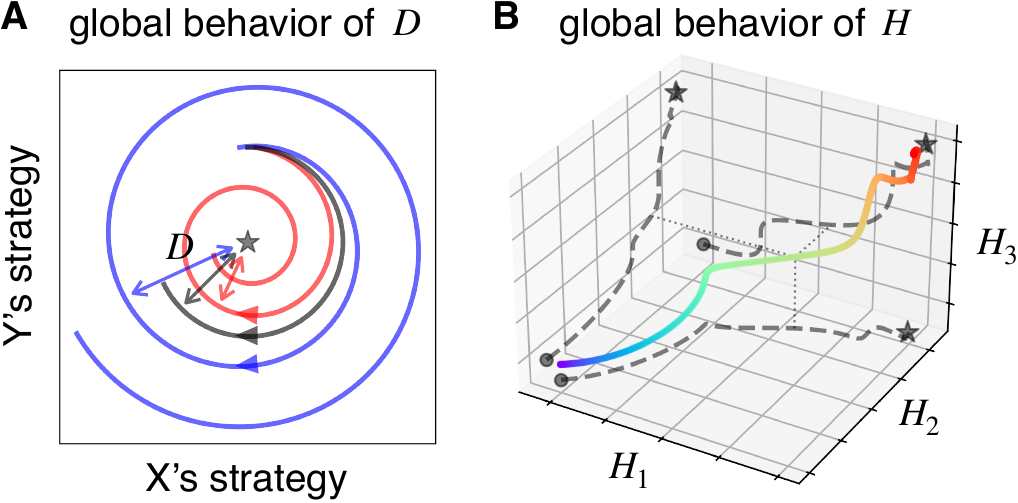}
    \caption{(A) Illustration of the global behavior of the conditional divergence, $D(\bs{X},\bs{y})$. Three trajectories (red, black, and blue) are plotted with the Nash equilibrium (the black star marker). The horizontal and vertical axes show X's strategy ($x_{1}^{\st}$) and Y's strategy ($y_{1}$) in the matching pennies game (formulated in Fig.~\ref{F02}). This divergence decreases (red: $\dot{D}<0$), cycles (black: $\dot{D}=0$), or increases (blue: $\dot{D}>0$) with time. These three lines are plotted for the different initial strategies, i.e., $\bs{X}$ and $\bs{y}$. (B) Illustration of the global behavior of the family of Lyapunov functions, $H(\bs{X};\bs{\delta})$. The colored line shows a trajectory (from purple to red) of Lyapunov functions $H_1$, $H_2$, and $H_3$, each of which is $H(\bs{X};\bs{\delta})$ for some specific $\bs{\delta}$. The gray broken lines are the projections of the black solid line to $H_1$-$H_2$, $H_2$-$H_3$, and $H_3$-$H_1$ planes. All of $H_1$, $H_2$, and $H_3$ monotonically increase with time.}
    \label{F01}
\end{figure*}

This study provides the first theoretical analysis of the global behavior of learning in with-memory games. We assume games where their memory structure is simplest and asymmetric; One side adopts a reactive strategy that can memorize the other's previous action~\cite{nowak1990stochastic, ohtsuki2004reactive, baek2016comparing, fujimoto2019emergence, schmid2022direct, fujimoto2021exploitation}, while the other has no memory. In order to characterize the global behavior of such with-memory games, we extend KL divergence and prove that such extended divergence increases or decreases with time depending on whether the reactive strategy is exploitative or not (see Fig.~\ref{F01}A). We further propose a family of Lyapunov functions that characterize the dynamics of the reactive strategy (see Fig.~\ref{F01}B). These Lyapunov functions show that the with-memory side monotonically learns to exploit the no-memory side. As an application of these functions, we suggest the convergence from arbitrary initial strategies to the equilibrium, i.e., global convergence. We prove global convergence in the matching pennies game. We also experimentally confirm that such global convergence is observed in other games equipped with various types of equilibrium.

%%%%% Figure 01 %%%%%

\section{Preliminary}
\subsection{Settings}
First, we formulate two-player normal-form games. We consider two players, denoted as X and Y. X's actions are denoted as $\{a_i\}_{1\le i\le m_{\sf X}}$, while Y's are $\{b_j\}_{1\le j\le m_{\sf Y}}$. When X and Y choose $a_i$ and $b_j$, they obtain the payoffs of $u_{ij}\in\mathbb{R}$ and $v_{ij}\in\mathbb{R}$, respectively. Thus, all their possible payoffs are given by the matrices, $\bs{U}:=(u_{ij})_{ij}$ and $\bs{V}:=(v_{ij})_{ij}$. Here, when $\bs{V}=-\bs{U}$ holds, the games are called zero-sum. Although our formulation of learning algorithms can be used for general games, this study focuses on zero-sum games.

% Figure 02
\begin{figure*}[h!]
    \centering
    \includegraphics[width=0.55\hsize]{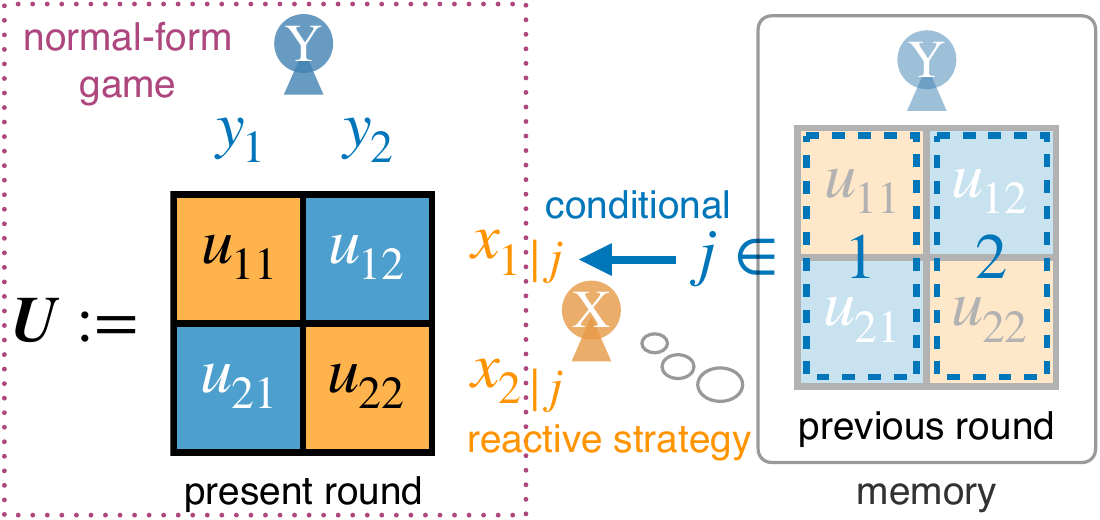}
    \caption{Illustration of games between reactive and zero-memory strategies. The area surrounded by the magenta dotted line shows the normal-form game. In each round, X chooses action $i=1$ or $2$ in the row, following its strategy, i.e., the probability distribution of $\bs{x}=(x_1,x_2)$. On the other hand, Y chooses action $j=1$ or $2$ in the column, following its strategy, i.e., the probability distribution of $\bs{y}=(y_1,y_2)$. Depending on their choices $i$ and $j$, X receives a payoff $u_{ij}$, given by a matrix form of $\bs{U}=(u_{ij})_{i,j}=((u_{11},u_{12}), (u_{21},u_{22}))$. Furthermore, in zero-sum games, Y receives $-u_{ij}$. Especially in the matching pennies game, their actions of $1$ ($2$) correspond to the choice of ``head'' (``tail'') of a coin. When their choices match $i=j$, X wins, i.e., $u_{11}=u_{22}=1$ (the orange blocks). Else when their choices mismatch $i\neq j$, Y wins, i.e., $u_{12}=u_{21}=-1$ (the blue blocks). The area outside of the magenta dotted line shows the difference due to an effect of memory. The gray box shows that X memorizes Y's previous action, represented as $j=1$ or $2$. Thus, X uses a reactive strategy and can choose its action with the conditional probability of $x_{1|j}$ and $x_{2|j}$ for Y's previous action.}
    \label{F02}
\end{figure*}

We assume that X use reactive strategies, i.e., can change its action choice depending on the other's previous action. This reactive strategy is denoted as $\bs{X}:=(x_{i|j})_{1\le i\le m_{\sf X}, 1\le j\le m_{\sf Y}}\in \prod_{1\le j\le m_{\sf Y}}\Delta^{m_{\sf X}-1}$, a matrix composed of $m_{\sf Y}$ vectors each of which are an element of a $m_{\sf X}-1$-dimensional simplex. Here, $x_{i|j}$ means the probability that X chooses $a_{i}$ in the condition when Y's previous action is $b_{j}$. Thus, $\sumi x_{i|j}=1$ should be satisfied for all $j$. On the other hand, Y only can use classical mixed strategies and choose its own action without reference to the previous actions. This mixed strategy is denoted as $\bs{y}=(y_{j})_{1\le j\le m_{\sf Y}}\in \Delta^{m_{\sf Y}-1}$, a vector which is an element of a $(m_{\sf Y}-1)$-dimensional simplex. Thus, $\sumj y_{j}=1$ should be satisfied.

\subsection{Stationary State and Expected Payoff}
We now discuss the stationary state and expected payoff of repeated games. Since Y determines its action independent of the outcomes of previous rounds, X's stationary strategy, defined as $\bs{x}^{\st}:=(x_{i}^{\st})_{1\le i\le m_{\sf X}}$, is given by $x_{i}^{\st}(\bs{x}_{i},\bs{y})=\sumj x_{i|j}y_{j}$.
Here, $x_{i}^{\st}$ means the probability that X chooses $a_i$ in the stationary state. Furthermore, the stationary state is described as $\bs{P}^{\st}:=\bs{x}^{\st}\otimes \bs{y}$ with use of $\bs{x}^{\st}$ and $\bs{y}$. Last, X's expected payoff is given by $u^{\st}(\bs{x}^{\st},\bs{y}):=\sumi\sumj u_{ij}p_{ij}^{\st}=\sumi\sumj u_{ij}x_{i}^{\st}y_{j}$.

\subsection{Nash Equilibrium}
We now define the Nash equilibrium in the normal-form game. Here, note that this equilibrium is based on games without memories, where X's strategy does not refer to the past games, i.e., $\bs{x}:=(x_{i})_{i}$. By using the expected payoff $u^{\st}(\bs{x},\bs{y})$ for games without memories, the Nash equilibrium $(\bs{x}^{*},\bs{y}^{*})$ is formulated as
\begin{align}
    \begin{cases}
        \bs{x}^{*}={\rm argmax}_{\bs{x}}u^{\st}(\bs{x},\bs{y}^*) \\
        \bs{y}^{*}={\rm argmin}_{\bs{y}}u^{\st}(\bs{x}^*,\bs{y}) \\
    \end{cases}.
\end{align}
From the definition, $u^{\st}(\bs{x},\bs{y})$ is the linear function for $\bs{x}$ and $\bs{y}$, and the Nash equilibrium condition is characterized by the gradient of such expected payoffs as
\begin{align}
    &\begin{cases}
        \partial u^{\st}/\partial x_{i}=\sumj u_{ij}y_{j}^{*}=C & (x_{i}^{*}>0) \\
        \partial u^{\st}/\partial x_{i}=\sumj u_{ij}y_{j}^{*}\le C & (x_{i}^{*}=0) \\
    \end{cases}, \\
    &\begin{cases}
        \partial u^{\st}/\partial y_{j}=\sumi u_{ij}x_{i}^{*}=C & (y_{j}^{*}>0) \\
        \partial u^{\st}/\partial y_{j}=\sumi u_{ij}x_{i}^{*}\ge C & (y_{j}^{*}=0) \\
    \end{cases}.
    \label{Nash_condition}
\end{align}
From these conditions, for all $i$ and $j$ such that $x_{i}^{*}>0$ and $y_{j}^{*}>0$ hold, respectively, we obtain
\begin{align}
    \sumi u_{ij}x_{i}^{*}=\sumj u_{ij}y_{j}^{*}=:u^{*}.
    \label{same_payoff}
\end{align}
Let us interpret this equation. First, $\sumi u_{ij}x_{i}^{*}=u^{*}$ means that when X takes its Nash equilibrium strategy, its own payoff is fixed to $u^{*}$, independent of Y's strategy. On the other hand, $\sumj u_{ij}y_{j}^{*}=u^{*}$ similarly means that Y's Nash equilibrium strategy fixes X's payoff to $u^{*}$. In other words, either X or Y takes its Nash equilibrium strategy, their payoffs are fixed. This is the special property in zero-sum games.

\subsection{Learning Algorithm: Replicator Dynamics}
Let us define the replicator dynamics as a representative learning algorithm. X's and Y's replicator dynamics are formulated as
\begin{align}
    \dot{x}_{i|j}&=+x_{i|j}\left(\frac{\rmd u^{\st}}{\rmd x_{i|j}}-\sumi x_{i|j}\frac{\rmd u^{\st}}{\rmd x_{i|j}}\right),
    \label{dotx} \\
    \dot{y}_{j}&=-y_{j}\left(\frac{\rmd u^{\st}}{\rmd y_{j}}-\sumj y_j\frac{\rmd u^{\st}}{\rmd y_{j}}\right).
    \label{doty}
\end{align}
Here, following the theorems in~\cite{fujimoto2023learning}, X's and Y's replicator dynamics include the gradient for the expected payoff $u^{\st}$. Thus, the update of X's strategy increases its payoff $u^{\st}$, while that of Y's strategy decreases the other's payoff $u^{\st}$. We discuss learning based on the replicator dynamics throughout this study, but we can extend all the following results to another typical learning algorithm, the gradient descent-ascent (see Appendix~\ref{app_extension} for detailed discussion).

\section{Theory on Learning Dynamics}
This section analyzes the dynamics of Eqs.~\eqref{dotx} and~\eqref{doty}. First, we compute in detail the gradient terms, which appear to be complex. Next, as a preliminary, we define positive definite matrices for some special vectors. Based on this definition, we introduce two quantities characterizing the dynamics of Eqs.~\eqref{dotx} and~\eqref{doty}: An extended KL divergence and a family of Lyapunov functions.

\subsection{Polynomial Expressions of Learning}
First, the gradient terms in Eqs.~\eqref{dotx} and~\eqref{doty} are computed as
\begin{align}
    \frac{\rmd u^{\st}(\bs{x}^{\st}(\bs{X},\bs{y}),\bs{y})}{\rmd x_{i|j}}&=\frac{\partial x_{i}^{\st}(\bs{x}_{i},\bs{y})}{\partial x_{i|j}}\frac{\partial u^{\st}(\bs{x}^{\st},\bs{y})}{\partial x_{i}^{\st}} \\
    &=y_{j}\sumjp u_{ij'}y_{j'},
    \label{gradx} \\
    \frac{\rmd u^{\st}(\bs{x}^{\st}(\bs{X},\bs{y}),\bs{y})}{\rmd y_{j}}&=\frac{\partial u^{\st}(\bs{x}^{\st},\bs{y})}{\partial y_{j}}+\sumi \frac{\partial x_{i}^{\st}(\bs{x}_{i},\bs{y})}{\partial y_{j}}\frac{\partial u^{\st}(\bs{x}^{\st},\bs{y})}{\partial x_{i}^{\st}} \\
    &=\sumi u_{ij}x_{i}^{\st}+\sumi x_{i|j}\sumjp u_{ij'}y_{j'}.
    \label{grady}
\end{align}
Here, we remark that Eqs.~\eqref{dotx} and~\eqref{doty} are nonlinear functions of $\bs{X}$ and $\bs{y}$, which is a feature of learning in with-memory games. Notably, however, these equations are polynomial expressions with $\bs{X}$ and $\bs{y}$. Such polynomial expressions cannot be seen in the games of other memory lengths~\cite{fujimoto2023learning, fujimoto2024memory} but are special in the games between reactive and no memory strategies.

\subsection{Positive Definiteness for Zero-Sum Vectors}
Next, let us introduce a definiteness of matrices. Here, however, this definite matrix is for vectors whose elements are summed to $0$, named ``zero-sum vectors''. In mathematics, zero-sum vector $\bs{\delta}:=(\delta_{k})_{k}$ satisfies $\sumk\delta_{k}=0$ but $\bs{\delta}\neq \bs{0}$.

\begin{definition}[Positive definiteness for zero-sum vectors]
A square matrix $\bs{M}$ is ``positive definite for zero-sum vectors'' when for all vectors $\bs{\delta}\neq \bs{0}$ such that $\sumk \delta_{k}=0$, $\bs{\delta}\cdot(\bs{M}\bs{\delta})<0$ holds.
\end{definition}

The positive definiteness for zero-sum vectors connects with an ordinary positive definiteness by a simple transformation of a matrix (see Appendix~\ref{app_definite} for details).

\subsection{Extended Kullback-Leibler Divergence}
The first quantity is an extended version of divergence. Before considering the extension, we introduce the classical version of divergence $D_{\rm c}$, which is the function of X's mixed strategies ($\bs{x}:=(x_{i})_{1\le i\le m_{\sf X}}\in\Delta^{m_{\sf X}-1}$) and Y's mixed strategies ($\bs{y}$) as
\begin{align}
    D_{\rm c}(\bs{x},\bs{y})&:=D_{\rm KL}(\bs{x}^{*}\|\bs{x})+D_{\rm KL}(\bs{y}^{*}\|\bs{y}), \\
    D_{\rm KL}(\bs{p}^*\|\bs{p})&:=\bs{p}^{*}\cdot\log \bs{p}^{*}-\bs{p}^{*}\cdot\log \bs{p}.
\end{align}
We now give an intuitive interpretation of this quantity. First, $D_{\rm KL}(\bs{p}^*\|\bs{p})$ is the KL divergence, meaning the distance from the reference point $\bs{p}^*$ to the target point $\bs{p}$. Thus, $D_{\rm c}(\bs{x},\bs{y})$ means the total distance from the Nash equilibrium $(\bs{x}^{*},\bs{y}^{*})$ to the current state $(\bs{x},\bs{y})$.

Let us extend the classical divergence to the case of this study, where X refers to the previous action of the other and can use reactive strategies $\bs{X}$. This extended divergence, i.e., $D(\bs{X},\bs{y})$, is named the ``conditional-sum'' divergence, formulated as
\begin{align}
    D(\bs{X},\bs{y})&:=\sumj D_{\rm KL}(\bs{x}^{*}\|\bs{x}_{j})+D_{\rm KL}(\bs{y}^{*}\|\bs{y}).
    \label{D_function}
\end{align}
We now remark the difference between $D(\bs{X},\bs{y})$ and $D_{\rm c}(\bs{x},\bs{y})$. Recall that X's reactive strategy is defined as $(\bs{x}_{j})_{1\le j\le m_{\sf Y}}$, which shows how to choose its action with the condition that Y chose $b_{j}$ in the previous round. Hence, $D(\bs{X},\bs{y})$ represents the summation of KL divergence from $\bs{x}^{*}$ to $\bs{x}_{j}$ for all the conditions of $j$. Here, we also remark that when the reactive strategy does not use memory, i.e., $\bs{x}_{j}=\bs{x}$ for all $j$, this conditional-sum divergence also captures the behavior of the classical divergence (see Appendix~\ref{app_divergence} for details).

This conditional-sum divergence satisfies the following theorem (see Appendix~\ref{proof_thm_D} for its full proof).

\begin{theorem}[Monotonic decrease of $D$ for positive definite $\bs{X}^{\rm T}\bs{U}$]
\label{thm_D}
If $\bs{X}^{\rm T}\bs{U}$ is positive definite for zero-sum vector, $D^{\dagger}(\bs{X};\rmd\bs{y}):=\dot{D}(\bs{X},\bs{y})<0$ for all $\rmd\bs{y}:=\bs{y}-\bs{y}^{*}\neq \bs{0}$.
\end{theorem}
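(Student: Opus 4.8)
The plan is to differentiate the conditional-sum divergence $D(\bs X,\bs y)$ along the replicator flow of Eqs.~\eqref{dotx}--\eqref{doty} and show that the Nash conditions force almost every term to cancel, reducing $\dot D$ to a single quadratic form in $\rmd\bs y:=\bs y-\bs y^{*}$ whose matrix is exactly $\bs X^{\rm T}\bs U$. First I would split $\dot D=\sumj\frac{\rmd}{\rmd t}D_{\rm KL}(\bs x^{*}\|\bs x_{j})+\frac{\rmd}{\rmd t}D_{\rm KL}(\bs y^{*}\|\bs y)$. Using $\frac{\rmd}{\rmd t}D_{\rm KL}(\bs p^{*}\|\bs p)=-\sum_{k}p_{k}^{*}\dot p_{k}/p_{k}$, substituting Eqs.~\eqref{dotx}--\eqref{doty} so that $\dot p_{k}/p_{k}$ is a fitness minus its own mean, and using $\sum_{k}p_{k}^{*}=1$, each block simplifies to $\frac{\rmd}{\rmd t}D_{\rm KL}(\bs x^{*}\|\bs x_{j})=\sumi(x_{i|j}-x_{i}^{*})\frac{\rmd u^{\st}}{\rmd x_{i|j}}$ and $\frac{\rmd}{\rmd t}D_{\rm KL}(\bs y^{*}\|\bs y)=\sumj(y_{j}^{*}-y_{j})\frac{\rmd u^{\st}}{\rmd y_{j}}$, the Y-sign being opposite because Y minimizes.

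Next I would insert the polynomial gradients of Eqs.~\eqref{gradx}--\eqref{grady}, keeping two identities in view: $\sumj y_{j}x_{i|j}=x_{i}^{\st}$ (the definition of the stationary strategy) and $\sumi x_{i|j}\sumjp u_{ij'}y_{j'}=(\bs X^{\rm T}\bs U\bs y)_{j}$. Applying the interior form of the Nash conditions in Eqs.~\eqref{Nash_condition}--\eqref{same_payoff}, namely $\sumjp u_{ij'}y_{j'}^{*}=u^{*}$ for every $i$ and $\sumi u_{ij}x_{i}^{*}=u^{*}$ for every $j$, together with $\sumi x_{i}^{*}=\sumj y_{j}=\sumj y_{j}^{*}=1$, the X-block collapses to $u^{\st}(\bs x^{\st},\bs y)-u^{*}$ and the Y-block to $\bigl(u^{*}-u^{\st}(\bs x^{\st},\bs y)\bigr)+(\bs y^{*}-\bs y)\cdot\bigl((\bs X^{\rm T}\bs U)\bs y\bigr)$. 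Adding the two blocks, the payoff scalars cancel, leaving $\dot D=(\bs y^{*}-\bs y)\cdot\bigl((\bs X^{\rm T}\bs U)\bs y\bigr)=-\rmd\bs y\cdot\bigl((\bs X^{\rm T}\bs U)\bs y\bigr)$.

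To finish, I would write $\bs y=\bs y^{*}+\rmd\bs y$ and observe that the cross term vanishes: $\rmd\bs y\cdot\bigl((\bs X^{\rm T}\bs U)\bs y^{*}\bigr)=\sumj\rmd y_{j}\sumi x_{i|j}\sumjp u_{ij'}y_{j'}^{*}=u^{*}\sumj\rmd y_{j}=0$, using $\sumjp u_{ij'}y_{j'}^{*}=u^{*}$, $\sumi x_{i|j}=1$, and $\sumj\rmd y_{j}=0$. Hence $D^{\dagger}(\bs X;\rmd\bs y)=\dot D=-\rmd\bs y\cdot\bigl((\bs X^{\rm T}\bs U)\rmd\bs y\bigr)$, an expression that incidentally makes manifest that $\dot D$ depends only on $\bs X$ and $\rmd\bs y$, justifying the notation $D^{\dagger}(\bs X;\rmd\bs y)$. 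Since $\rmd\bs y\neq\bs{0}$ with $\sumj\rmd y_{j}=0$, the vector $\rmd\bs y$ is a zero-sum vector, so the hypothesis that $\bs X^{\rm T}\bs U$ is positive definite for zero-sum vectors applies with $\bs\delta=\rmd\bs y$ and delivers $\dot D<0$.

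I expect the bookkeeping of the middle paragraph to be the main obstacle: essentially every term linear in the entries of $\bs U$ must be made to vanish through the Nash conditions and the simplex normalizations, leaving only the single genuinely quadratic remainder carrying the matrix $\bs X^{\rm T}\bs U$; it is easy to mismanage the factor $y_{j}$ that Eq.~\eqref{gradx} attaches to X's fitness, or to overlook that $\sumj y_{j}x_{i|j}=x_{i}^{\st}$ is precisely what turns the unwieldy double sums into $u^{\st}(\bs x^{\st},\bs y)$. A secondary subtlety is that the cancellations above were run for an interior Nash equilibrium (the relevant case for, e.g., matching pennies); for a boundary equilibrium one must instead propagate the support-dependent inequalities of Eq.~\eqref{Nash_condition} through the same steps.
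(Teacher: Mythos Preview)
Your proposal is correct and follows essentially the same route as the paper's proof: both compute $\dot D$ along the replicator flow, use the Nash identities of Eq.~\eqref{same_payoff} together with the simplex normalizations to cancel all payoff-scalar terms, and arrive at $\dot D=-\rmd\bs y^{\rm T}\bs X^{\rm T}\bs U\,\rmd\bs y$. The only cosmetic difference is bookkeeping---you first collapse each replicator ratio to $\sum_i(x_{i|j}-x_i^{*})\partial_{x_{i|j}}u^{\st}$ and $\sum_j(y_j^{*}-y_j)\partial_{y_j}u^{\st}$ and then cancel $u^{\st}-u^{*}$ between an ``X-block'' and a ``Y-block'', whereas the paper expands the whole expression at once into five labeled terms (A)--(E) and cancels them pairwise; the substance is identical.
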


\textsc{Proof Sketch.} We calculation $\dot{D}(\bs{X},\bs{y})$ in practice. In the calculation, the contribution of X's gradient (Eq.~\eqref{gradx}) cancels out the contribution of the first term of the gradient of Y (Eq.~\eqref{grady}). (Here, we remark that the same canceling out also occurs in the calculation for the conservation of the classical divergence $D_{\rm c}(\bs{x},\bs{y})$ in games without memory.) However, the contribution of the second term of Eq.~\eqref{grady} is special in games of a reactive strategy. By using the constant payoff condition in the Nash equilibrium (Eqs.~\eqref{same_payoff}), we obtain
\begin{align}
    \dot{D}(\bs{X},\bs{y})=-\rmd\bs{y}^{\rm T}\bs{X}^{\rm T}\bs{U}\rmd\bs{y}\ (=:D^{\dagger}(\bs{X};\rmd\bs{y})),
\end{align}
which means the difference from Y's equilibrium strategy and is a zero-sum vector. Thus, when $\bs{X}^{\rm T}\bs{U}$ is positive definite for zero-sum vectors, $\rmd\bs{y}^{\rm T}\bs{X}^{\rm T}\bs{U}\rmd\bs{y}$ is always positive, leading to $D^{\dagger}(\bs{X};\rmd\bs{y})<0$ for all $\rmd\bs{y}\neq \bs{0}$. \qed \\

\subsection{Family of Lyapunov Functions}
Furthermore, we introduce a Lyapunov function, which characterizes the learning dynamics of X's reactive strategy. Based on an arbitrary zero-sum vector $\bs{\delta}:=(\delta_{i})_{1\le i\le m_{\sf X}}$, this function is defined as
\begin{align}
    H(\bs{X};\bs{\delta}):=\bs{\delta}^{\rm T}\bs{U}\log\bs{X}^{\rm T}\bs{\delta}.
    \label{H_function}
\end{align}
The following theorem holds for this Lyapunov function (see Appendix~\ref{proof_thm_H} for its full proof).

\begin{theorem}[Monotonic increase of $H$]
\label{thm_H}
For all $\bs{\delta}$ such that $\sumi \delta_{i}=0$, $H^{\dagger}(\bs{y};\bs{\delta}):=\dot{H}(\bs{X};\bs{\delta})\ge 0$. The equality holds if and only if $d\bs{y}(=\bs{y}-\bs{y}^{*})=\bs{0}$.
\end{theorem}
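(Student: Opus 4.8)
\textsc{Proof plan.} The plan is to differentiate $H$ along the flow of Eqs.~\eqref{dotx} and \eqref{doty} and to exploit that $\bs{\delta}$ is a zero-sum vector, which makes the replicator normalization term disappear. Since $H(\bs{X};\bs{\delta})=\bs{\delta}^{\rm T}\bs{U}(\log\bs{X}^{\rm T})\bs{\delta}$ is linear in the entrywise logarithm of $\bs{X}^{\rm T}$ with time-independent coefficients, I would first write $\dot{H}=\sumj(\bs{\delta}^{\rm T}\bs{U})_{j}\sumi\delta_{i}\,\dot{x}_{i|j}/x_{i|j}$. Substituting Eq.~\eqref{dotx} together with the polynomial gradient Eq.~\eqref{gradx} gives the replicator-type relative growth rate $\dot{x}_{i|j}/x_{i|j}=y_{j}[(\bs{U}\bs{y})_{i}-\sumip x_{i'|j}(\bs{U}\bs{y})_{i'}]$.

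The crucial observation is that the bracketed normalization term does not depend on $i$, so after multiplying by $\delta_{i}$ and summing over $i$ it contributes $(\sumi\delta_{i})$ times a quantity independent of $i$, which is $0$ because $\sumi\delta_{i}=0$. Hence $\sumi\delta_{i}\,\dot{x}_{i|j}/x_{i|j}=y_{j}\,\bs{\delta}^{\rm T}\bs{U}\bs{y}$, and therefore
\begin{align}
\dot{H}(\bs{X};\bs{\delta})=(\bs{\delta}^{\rm T}\bs{U}\bs{y})\sumj(\bs{\delta}^{\rm T}\bs{U})_{j}\,y_{j}=(\bs{\delta}^{\rm T}\bs{U}\bs{y})^{2}\ge 0.
\end{align}
The right-hand side no longer involves $\bs{X}$, which is precisely why it can be written as $H^{\dagger}(\bs{y};\bs{\delta})$; this already establishes the inequality.

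For the equality case I would invoke the constant-payoff property at the Nash equilibrium, Eq.~\eqref{same_payoff}: since $(\bs{U}\bs{y}^{*})_{i}=u^{*}$ on the support of $\bs{x}^{*}$, one gets $\bs{\delta}^{\rm T}\bs{U}\bs{y}^{*}=u^{*}\sumi\delta_{i}=0$, so $\bs{\delta}^{\rm T}\bs{U}\bs{y}=\bs{\delta}^{\rm T}\bs{U}\,\rmd\bs{y}$ and $\dot{H}=(\bs{\delta}^{\rm T}\bs{U}\,\rmd\bs{y})^{2}$. This vanishes exactly when $\rmd\bs{y}$ lies in the kernel of $\bs{\delta}^{\rm T}\bs{U}$; since $\rmd\bs{y}$ is itself a zero-sum vector and, in the games considered (e.g.\ $\bs{\delta}^{\rm T}\bs{U}\propto(1,-1)$ for matching pennies), $\bs{\delta}^{\rm T}\bs{U}$ is a nonzero multiple of a zero-sum vector, the only such $\rmd\bs{y}$ is $\bs{0}$. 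The derivative computation itself is a one-liner once the zero-sum cancellation is spotted; I expect the delicate point to be this equality characterization, where one must ensure $\bs{y}^{*}$ is interior (or restrict $\bs{\delta}$ to the equilibrium support) so that Eq.~\eqref{same_payoff} really yields $\bs{\delta}^{\rm T}\bs{U}\bs{y}^{*}=0$, and that $\bs{\delta}^{\rm T}\bs{U}$ is non-degenerate enough on the zero-sum subspace to force $\rmd\bs{y}=\bs{0}$ — this is where the zero-sum structure and the choice of games actually enter.
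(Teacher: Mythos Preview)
Your derivative computation is correct and essentially identical to the paper's: the zero-sum cancellation of the replicator normalization term, followed by the rewriting $\bs{\delta}^{\rm T}\bs{U}\bs{y}=\bs{\delta}^{\rm T}\bs{U}\,\rmd\bs{y}$ via Eq.~\eqref{same_payoff}, gives $\dot{H}=(\bs{\delta}^{\rm T}\bs{U}\,\rmd\bs{y})^{2}\ge 0$ just as in Appendix~\ref{proof_thm_H}.

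The equality case is where you diverge from the paper, and it is exactly the point you flagged as delicate. You read the statement as ``for each fixed $\bs{\delta}$, equality holds iff $\rmd\bs{y}=\bs{0}$'', which forces you to assume that $\bs{\delta}^{\rm T}\bs{U}$ is non-degenerate on the zero-sum hyperplane of $\mathbb{R}^{m_{\sf Y}}$. That holds in matching pennies, but once $m_{\sf Y}\ge 3$ a single linear functional always has a nontrivial kernel there, so your per-$\bs{\delta}$ reading is not provable in general. The paper instead interprets the claim as ``$H^{\dagger}(\bs{y};\bs{\delta})=0$ \emph{for all} zero-sum $\bs{\delta}$ iff $\rmd\bs{y}=\bs{0}$'' (this is made explicit in both the proof sketch and the full proof). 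Their argument substitutes $\bs{\delta}=\bs{x}-\bs{x}^{*}$ for arbitrary $\bs{x}\in\Delta^{m_{\sf X}-1}$, obtaining $H^{\dagger}(\bs{y};\bs{x}-\bs{x}^{*})=|u^{\st}(\bs{x},\bs{y})-u^{*}|^{2}$; if this vanished for every $\bs{x}$ while $\bs{y}\neq\bs{y}^{*}$, then $\bs{y}$ would itself attain the equilibrium value against $\bs{x}^{*}$, contradicting the defining optimality of $\bs{y}^{*}$. This route needs no non-degeneracy hypothesis on $\bs{U}$ and no restriction to particular games, so the ``choice of games'' you anticipated is not actually where the argument bites.
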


\textsc{Proof Sketch.} By using Eq.~\eqref{same_payoff}, we calculate
\begin{align}
    \dot{H}(\bs{X};\bs{\delta})=|\bs{\delta}^{\rm T}\bs{U}\rmd\bs{y}|^2\ (=:H^{\dagger}(\bs{y};\bs{\delta})).
\end{align}
This means that $H^{\dagger}(\bs{y};\bs{\delta})\ge 0$ for all $\bs{\delta}$. If we substitute $\bs{\delta}=\bs{x}-\bs{x}^{*}$ for some $\bs{x}\in\Delta^{m_{\sf X}-1}$, we obtain
\begin{align}
    H^{\dagger}(\bs{y};\bs{x}-\bs{x}^{*})=|u^{\st}(\bs{x},\bs{y})-u^{*}|^2.
\end{align}
For any $\bs{y}\neq\bs{y}^{*}\Leftrightarrow\rmd\bs{y}\neq\bs{0}$, there is $\bs{x}$ such that $u^{\st}(\bs{x},\bs{y})-u^{*}\neq 0$. Thus, $\rmd\bs{y}=\bs{0}$ is equivalent to $H^{\dagger}(\bs{y};\bs{\delta})=0$ for all $\bs{\delta}$. \qed \\

% Hという量を解釈しよう。まず、明らかに $H$ は $(u^{st}-u)^2$ と解釈できる。
Let us interpret the function of $H(\bs{X};\bs{\delta})$. First, $H(\bs{X};\bs{\delta})$ is the quadratic form of matrix $\bs{U}\log\bs{X}^{\rm T}$ for $\bs{\delta}$. We now focus on the meaning of $\bs{U}\log\bs{X}^{\rm T}$. The $i'$, $i$ element of $\bs{U}\log\bs{X}^{\rm T}$ is given by $\bs{u}_{i'}\cdot\log\bs{x}_{i}$, in which we denoted vector $\bs{x}_{i}:=(x_{i|j})_{j}$. This is the inner product of the payoff under X taking $i'$-th action ($\bs{u}_{i'}$) and the logarithmic strategy under X taking $i$-th action ($\log\bs{x}_{i}$). Thus, $\bs{U}\log\bs{X}^{\rm T}$ shows a correspondence matrix between X's strategy and its payoff matrix. Since $H(\bs{X};\bs{\delta})$ is the quadratic form of the correspondence matrix, it tends to be large when the diagonal elements of the correspondence matrix are large. The diagonal elements, i.e., $\bs{u}_{i}\cdot\log\bs{x}_{i}$, indicate how X exploits Y for each $i$-th action. Thus, $H(\bs{X};\bs{\delta})$ means the degree of exploitation of Y by X.

% Figure 03
\begin{figure*}[tb]
    \centering
    \includegraphics[width=0.65\hsize]{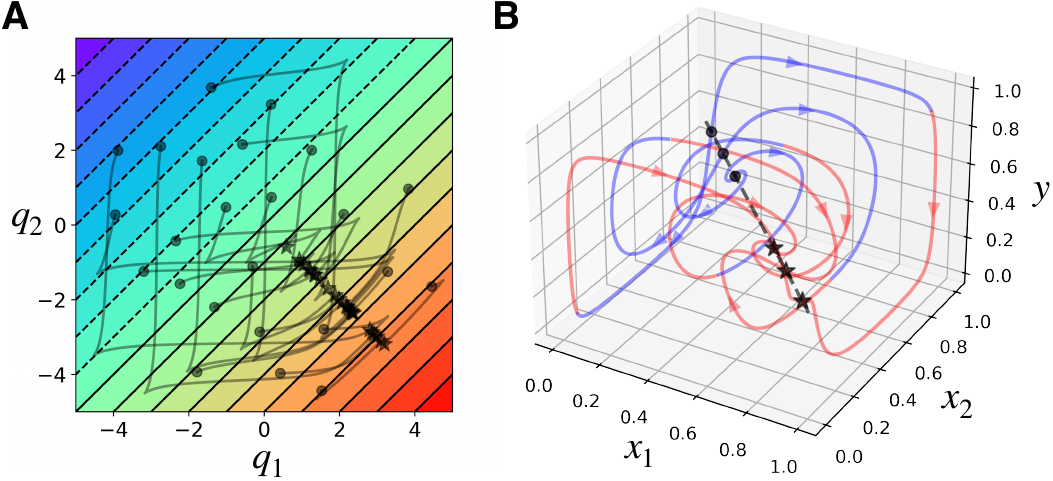}
    \caption{(A) Trajectories of $q_1$ and $q_2$. The rainbow contour plot indicates the value of $q_1-q_2$. All the trajectories monotonically increase $q_1-q_2$ with time and converge in the area of $q_1>q_2$ in their final states. (B) Trajectories of the learning dynamics. The black broken line corresponds to the region of Nash equilibria, $\bs{x}^{\st}=\bs{y}=(1/2,1/2)$. Each colored line shows a trajectory of the learning dynamics. First, the circle markers show the initial states. Following the blue lines, the trajectories diverge from the Nash equilibria ($D(\bs{X},\bs{y})$ increases with time). However, the trajectories stop to diverge and switch to converge to the Nash equilibria ($D(\bs{X},\bs{y})$ decreases), following the red lines. The star markers are the final states and correspond to one of the Nash equilibria.}
    \label{F03}
\end{figure*}

\subsection{Global Behavior by Two Quantities}
So far, the two theorems (Thms.~\ref{thm_D} and~\ref{thm_H}) explain how the two quantities vary with time. Let us understand the global behavior of the learning dynamics by interpreting the two quantities.

\paragraph{$D$ explains increasing/decreasing of distance:} First, recall that $D(\bs{X},\bs{y})$ means the distance from the Nash equilibrium. Thm.~\ref{thm_D} shows that the distance becomes smaller when $\bs{X}^{\rm T}\bs{U}$ is positive definite, whereas larger when $\bs{X}^{\rm T}\bs{U}$ is negative definite. Here, we focus on $\bs{X}^{\rm T}\bs{U}$ which determines whether the distance becomes smaller or larger. The $j$, $j'$ element of $\bs{X}^{\rm T}\bs{U}$ is given by $\bs{x}_{j}\cdot\bs{u}_{j'}$, in words, the correspondence between X's strategy for $j$-th action ($\bs{x}_{j}$) and its payoff for $j'$-th action ($\bs{u}_{j'}$). The eigenvalues of a matrix are roughly determined by the diagonal elements $\bs{x}_{j}\cdot\bs{u}_{j}$. This diagonal element is larger when $x_{i|j}$ takes a larger value for larger $u_{ij}$, meaning that X exploits Y's payoff more. As the simplest example, $\bs{X}^{\rm T}\bs{U}$ is positive definite when X takes $x_{i|j}=1$ for $i=\hat{i}$ such that $\hat{i}=\arg\max_{i}u_{ij}$, while $x_{i|j}=0$ for $i\neq\hat{i}$. To summarize, Thm.~\ref{thm_D} captures the global behavior where if X exploits Y, their strategies converge to the Nash equilibrium; otherwise, they diverge.
% まず、Dという関数について考えてみよう。これはNash均衡との距離に対応している。定理は、$XU$がpositive definiteであるときには、この距離が縮まり、逆にnegative definiteであるときには、距離が伸びることを示している。この距離のincreasing, decreasingを決定する XU について考察しよう。この$j, j'$成分は $j'$という相手の行動の利得 $\bs{u}_{j'}$ と、それに対する自分の戦略 $\bs{x}_{j}$ の内積 $\bs{x}_{j}\cdot\bs{u}_{j'}$ で与えられる。行列の固有値はroughには対角成分の大きさに依存し、自分がより利得を得られるような(相手を搾取するような)戦略を取れば、対角成分は大きくなる。最も簡単には $x_{i|j}=1$ for $i=\hat{i}$ such that $\hat{i}=\argmax_{i}u_{ij}$, $x_{i|j}=0$ for $i\neq\hat{i}$ とすれば良い。この意味で、Xが搾取的であれば、Nash均衡に収束し、そうでないなら発散するというふるまいが見られる。

\paragraph{$H$ explains the monotonic increase of exploitability:} 
Next, recall that $\bs{U}\log\bs{X}^{\rm T}$ in $H(\bs{X};\bs{\delta})$ indicates a correspondence matrix between X's strategy and its payoff matrix. Thus, Thm.~\ref{thm_H} explains that unless Y takes the equilibrium strategy, this correspondence continues to increase with time. We also pay attention to the change in the degree of correspondence, i.e., $H^{\dagger}(\bs{y};\bs{\delta})$. By substituting $\bs{\delta}=\bs{x}-\bs{x}^{*}$ for some $\bs{x}\in\Delta^{m_{\sf X}-1}$, we rewrite it as $H^{\dagger}(\bs{y};\bs{x}-\bs{x}^{*})=|u^{\st}(\bs{x},\bs{y})-u^{*}|^2$. Here, $|u^{\st}(\bs{x},\bs{y})-u^{*}|$ indicates the difference in payoff from the equilibrium, i.e., the exploitation by X to Y. Therefore, the correspondence between X's strategy and its payoff matrix becomes larger according to the exploitability.

% 次に、我々はThm.~\ref{}の意味を考えてみよう。まず、もしYがout of equilibriumであるなら、Nash均衡の満たす条件, Eq.~\eqref{}より、必ず $\bs{U}\rmd\bs{y}$は異なるvalueをとる。これは、Xにとって少なくともある行動 $i$ が別の行動 $i'$ よりも得になることを示している。このような方向に
% どのような方向で見てもYを搾取できるように\bs{X}を学習するということがわかる。(この意味で、reactive strategyの学習はbest-response dynamicsに近いかもしれない。)
% そのような方向についてdeltaを取る。すると、その方向に対して dUlogXdが上昇する。
% まず、すでに説明したように、この関数の $\bs{U}\log\bs{X}^{\rm T}$ という量は、Xの搾取の度合いを示している。よって、Yがout of equilibriumの時には、Nash均衡の同値条件より $\bs{U}\rmd\bs{y}$ が均一でなくなる。よって、

% これは、Yがout of equilibriumのとき($\bs{y}\neq\bs{y}^{*}$)、$H$が増加することを示している。ここで、$H$という量を特徴付ける$\bs{U}\log\bs{X}^{\rm T}$を考えてみよう。このmatrixの $i$, $i'$のelementは $\bs{u}_{i}\cdot\log\bs{x}_{i}$である。

\paragraph{Remark:} We have interpreted both $\bs{X}^{\rm T}\bs{U}$ and $\bs{U}\log\bs{X}^{\rm T}$ as the degree of correspondence between X's strategy and its payoff matrix. The interpretation is qualitatively true, but we remark that there are several quantitative differences between $\bs{X}^{\rm T}\bs{U}$ and $\bs{U}\log\bs{X}^{\rm T}$, such as the order of multiplication and the existence of logarithm.

\section{Application: Global Convergence}
% 上記の2つのThmによって説明されるglobal behaviorを組み合わせることによって、XやYの初期値に依存しないNash均衡へのglobal convergenceが示唆される。Thm.~\ref{thm_H}は Y が非平衡であるなら、Xがより搾取的になることを示している。一方で
By combining the global behaviors obtained from Thms.~\ref{thm_D} and ~\ref{thm_H}, we expect that the global convergence to the Nash equilibrium occurs regardless of X's and Y's initial strategies. Thm.~\ref{thm_H} shows that as long as Y's strategy is out of equilibrium, the correspondence between X's strategy and its payoff matrix continues to be stronger. Afterward, Thm.~\ref{thm_D} shows that if the correspondence is sufficiently strong, Y's strategy is induced to the Nash equilibrium. In the following, our theory and experiment support that such global convergence occurs.

\subsection{Example: Matching Pennies}
Let us define the matching pennies game (see Fig.~\ref{F02} for the illustration of its payoff matrix). This game considers the action numbers of $m_{\sf X}=m_{\sf Y}=2$ and the payoff matrix of $\bs{U}=((u_{11},u_{12}), (u_{21},u_{22}))=((+1,-1), (-1,+1))$. The Nash equilibrium of this game is only $\bs{x}^{*}=\bs{y}^{*}=(1/2,1/2)$. This game is the simplest example of a game equipped with a full-support Nash equilibrium. In addition, it has been known that the replicator dynamics in games without memories cycle around the Nash equilibrium and cannot reach the equilibrium. Nevertheless, by considering the memory asymmetry (Eqs.~\eqref{dotx} and~\eqref{doty}), Y's strategy succeeds in the convergence to the equilibrium, as shown in the following corollary (see Appendix~\ref{proof_cor_global} for its full proof). For convenience, we use a special notation available in two-action games; $(x_{1|j},x_{2|j})=:(x_{j},1-x_{j})$, $q_{j}:=\log x_{j}-\log (1-x_{j})$, and $(y_{1},y_{2})=:(y,1-y)$.

\begin{corollary}[Global convergence in matching pennies]
\label{cor_global}
In the matching pennies game $\bs{U}=((+1,-1),(-1,+1))$, Y's strategy $\bs{y}$ converges to the equilibrium $\bs{y}^*$, independent of both the players' initial strategies.
\end{corollary}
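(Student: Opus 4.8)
The plan is to reduce both quantities of the preceding section to scalar functions of the three-dimensional state $(x_1,x_2,y)$, treat $q_1-q_2$ as a monotone Lyapunov function, show it stays bounded, and then read off $\bs y\to\bs y^{*}$ from the integrability of its time derivative. I work on the interior $(0,1)^3$, which is forward invariant under Eqs.~\eqref{dotx}--\eqref{doty} (the prefactors $x_j(1-x_j)$ and $y(1-y)$ vanish on the boundary) and on which $D(\bs X,\bs y)$ is finite. First I would specialize the machinery. From Eqs.~\eqref{dotx} and~\eqref{gradx}, $\dot q_j=\rmd u^{\st}/\rmd x_{1|j}-\rmd u^{\st}/\rmd x_{2|j}$, which for $\bs U=((+1,-1),(-1,+1))$ evaluates to $\dot q_1=2y(2y-1)$ and $\dot q_2=2(1-y)(2y-1)$, so that
\begin{align*}
    \frac{\rmd}{\rmd t}(q_1-q_2)=2(2y-1)^2=8\big(y-\tfrac12\big)^2\ge 0,
\end{align*}
in accordance with Thm.~\ref{thm_H} (where $\bs\delta=(1,-1)$ gives $H(\bs X;\bs\delta)=2(q_1-q_2)$). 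Likewise, specializing Thm.~\ref{thm_D} to the zero-sum vector $\rmd\bs y=(y-\tfrac12)(1,-1)$ gives $\rmd\bs y^{\rm T}\bs X^{\rm T}\bs U\,\rmd\bs y=(2y-1)^2(x_1-x_2)$, hence
\begin{align*}
    \dot D(\bs X,\bs y)=-(2y-1)^2(x_1-x_2),
\end{align*}
and since $q_j\mapsto x_j$ is strictly increasing, $x_1-x_2$ carries the same sign as $q_1-q_2$ at all times.

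The crux is to show that $q_1-q_2$, being monotone nondecreasing, in fact remains bounded, so that it converges to a finite limit $L$. Suppose not, i.e.\ $q_1-q_2\to+\infty$. Then there is a time $t_0$ past which $q_1>q_2$ and hence $x_1>x_2$, so by the formula above $\dot D\le 0$ on $[t_0,\infty)$ and therefore $D(t)\le D(t_0)<\infty$ for all $t\ge t_0$. But $D=D_{\rm KL}(\bs x^{*}\|\bs x_1)+D_{\rm KL}(\bs x^{*}\|\bs x_2)+D_{\rm KL}(\bs y^{*}\|\bs y)$ is a sum of nonnegative terms, so each obeys $D_{\rm KL}(\bs x^{*}\|\bs x_j)(t)\le D(t_0)$ for $t\ge t_0$; since $D_{\rm KL}(\bs x^{*}\|\bs x_j)=-\tfrac12\log\!\big(4x_j(1-x_j)\big)\to\infty$ as $x_j\to 0$ or $1$, this bound confines $x_1,x_2$, and therefore $q_1,q_2$, to a fixed compact subset of $(0,1)$ (respectively of $\mathbb{R}$) for all $t\ge t_0$. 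Together with continuity on $[0,t_0]$ this makes $q_1-q_2$ bounded, contradicting $q_1-q_2\to+\infty$.

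It remains to conclude. Integrating the first display gives $\int_0^{\infty}\big(y(t)-\tfrac12\big)^2\,\rmd t=\tfrac18\big(L-(q_1(0)-q_2(0))\big)<\infty$. A one-line bound on Eqs.~\eqref{grady}--\eqref{doty} (all probabilities lie in $[0,1]$, $|u_{ij}|=1$, and $y(1-y)\le\tfrac14$) shows $|\dot y|$ is uniformly bounded, so $t\mapsto(y(t)-\tfrac12)^2$ is uniformly continuous; an integrable, uniformly continuous function tends to $0$ (Barbalat's lemma), hence $(y(t)-\tfrac12)^2\to 0$ and $\bs y=(y,1-y)\to(\tfrac12,\tfrac12)=\bs y^{*}$ from every interior initial state. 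I expect the middle step to be the only real difficulty: one is tempted to confront the regime $q_1-q_2\to\infty$ by analyzing the coupled three-dimensional dynamics head-on, which is unwieldy; the clean route is instead to feed the sign condition of Thm.~\ref{thm_D} into the coercivity of the individual Kullback--Leibler terms constituting $D$, which pins $q_1$ and $q_2$ down and rules that regime out.
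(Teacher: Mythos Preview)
Your argument is correct and in fact tighter than the paper's. Both proofs start from the same scalar reductions $\tfrac{\rmd}{\rmd t}(q_1-q_2)=2(2y-1)^2$ and $\dot D=-(2y-1)^2(x_1-x_2)$, but they diverge in how they finish. The paper argues by dichotomy: either $q_1-q_2$ stays bounded (whence, implicitly, $y\to y^*$), or it diverges to $+\infty$, in which case eventually $x_1>x_2$, $\dot D\le 0$, and the lower-boundedness of $D$ is invoked to force $y\to y^*$. You instead \emph{eliminate} the divergent branch by feeding the sign condition back into the structure of $D$: once $q_1>q_2$ one has $D(t)\le D(t_0)$, hence each summand $D_{\rm KL}(\bs x^*\|\bs x_j)=-\tfrac12\log\bigl(4x_j(1-x_j)\bigr)$ is bounded, and its coercivity traps $q_1,q_2$ in a compact set---contradiction. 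This buys two things the paper's sketch does not make explicit: a proof that $\bs X$ stays in a compact subset of the open cube, and a clean application of Barbalat's lemma in place of the paper's informal ``after sufficiently long time'' passages. It also sidesteps a soft spot in the paper's divergent case, where $\dot D\to 0$ by itself would allow $x_1-x_2\to 0$ rather than $y\to y^*$; your boundedness step preempts that issue entirely.
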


\textsc{Proof Sketch.} First, note that $q_1>q_2\Leftrightarrow x_1>x_2$. By the direct calculation, we prove $H(\bs{X};\bs{\delta})\propto q_1-q_2$. Thm.~\ref{thm_H} shows that as long as $\bs{y}=\bs{y}^*$, $H(\bs{X};\bs{\delta})$ continues to increase. Thus, after a sufficiently long time, $H(\bs{X};\bs{\delta})>0\Leftrightarrow q_1>q_2\Leftrightarrow x_1>x_2$ continue to hold. We can also prove that $x_1>x_2$ is equivalent to the positive definiteness of $\bs{X}^{\rm T}\bs{U}$. Thm.~\ref{thm_D} shows that under positive definite $\bs{X}^{\rm T}\bs{U}$, $\bs{y}$ asymptotically converges to $\bs{y}^{*}$, its equilibrium strategy. \qed \\

Our experiments visualize the mechanism of the global convergence based on Thm.~\ref{thm_D} and~\ref{thm_H}. Fig.~\ref{F03}A shows the dynamics of $q_1$ and $q_2$. Here, the colors indicate the contour plot for $q_1-q_2$, showing that $H(\bs{X};\bs{\delta})\propto q_1-q_2$ monotonically increases with time and thus Thm.~\ref{thm_H} holds. Furthermore, we also see that X's strategy reaches the region of $q_1>q_2\Leftrightarrow x_1>x_2$ after a sufficiently long time passes. In the region, $\bs{X}^{\rm T}\bs{U}$ is positive definite, and thus Thm.~\ref{thm_D} is applicable after sufficiently long time passes.

%%%%% Figure 03 %%%%%

% Figure 04
\begin{figure*}[tb]
    \centering
    \includegraphics[width=0.9\hsize]{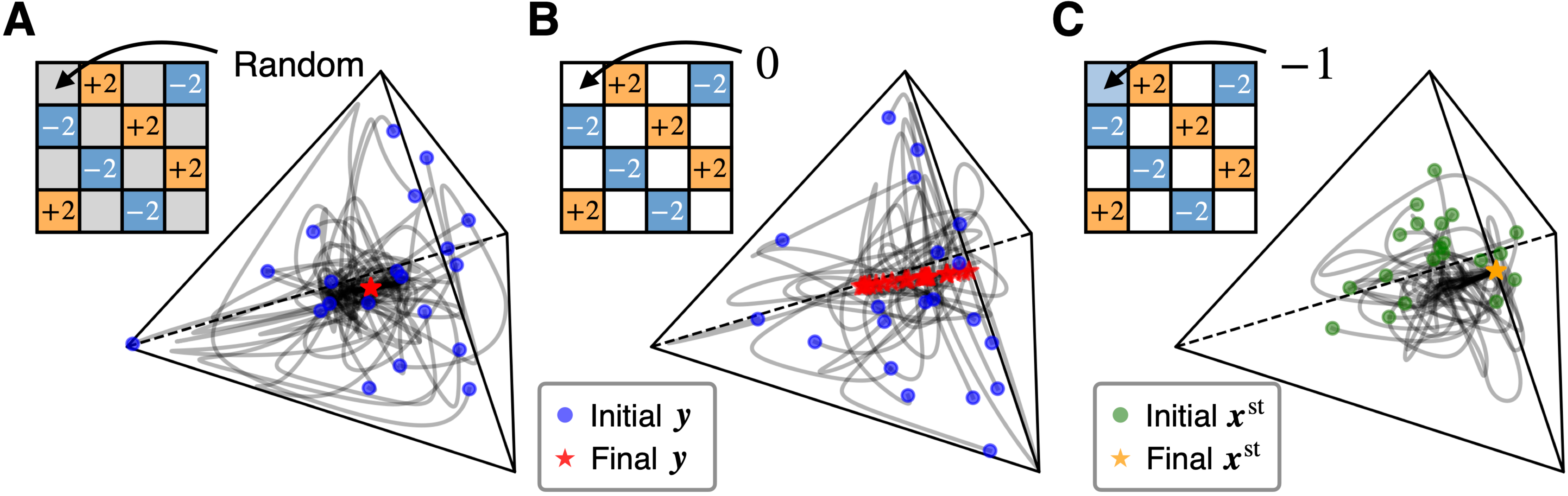}
    \caption{Global convergence in the coupled matching pennies games, where the second, third, fourth, and first actions win the other's first, second, third, and fourth actions, respectively. The winner receives the payoff of $2$ (the orange blocks in the matrices for the winning of X), while the loser sends the payoff of $2$ (the blue blocks). We now introduce three variants for the other blocks in the payoff matrix. (A) The case of interior equilibrium. We set each of the other blocks by random numbers in $[-1,1]$ (the gray blocks). Then, Y's strategy converges to the unique Nash equilibrium (the red star marker) independent of its initial state (the blue circle markers). (B) The case of continuous equilibrium. We set each of the other blocks by $0$, where the payoff matrix degenerates. Y's strategy converges to one of the Nash equilibria (the line consisting of the red star markers) depending on its initial state. (C) The case of boundary equilibrium. Only the block for the interaction between an action is set to $-1$, and the others are $0$. If so, X's strategy converges to the unique Nash equilibrium (the orange star markers) independent of its initial state (the green circle markers). Instead, Y's strategies do not converge.}
    \label{F04}
\end{figure*}

Next, Fig.~\ref{F03}B plots the global behavior of the learning dynamics, which is described by the three parameters of $x_1$, $x_2$, and $y$. The gray line shows the region that corresponds to the Nash equilibrium, i.e., $\bs{x}^{\st}(\bs{X},\bs{y})=\bs{y}=(1/2,1/2)$. Furthermore, the colored lines show example trajectories of the learning dynamics. The blue part of the line shows that $D(\bs{X},\bs{y})$ increases at the beginning of the learning dynamics. This part is in the region of $x_1<x_2$, following Thm.~\ref{thm_D}. After that, the red part shows that $D(\bs{X},\bs{y})$ decreases, and the learning dynamics converge to the equilibrium. This part is in the region of $x_1>x_2$, following Thm.~\ref{thm_D}.

\subsection{Example: Coupled Matching Pennies}
We also observe the global convergence in other zero-sum games beyond the matching pennies game. Fig.~\ref{F04} considers three examples of ``coupled'' matching pennies games, where a pair of matching pennies games are coupled with some interaction. The game considers the action numbers of $m_{\sf X}=m_{\sf Y}=4$, and some elements of the payoff matrix is fixed as $u_{ij}=+2$ for $j=\sigma(i)$ and $u_{ij}=-2$ for $i=\sigma(j)$. Here, we used the permutation function $\sigma$ as $\sigma(1)=2$, $\sigma(2)=3$, $\sigma(3)=4$, and $\sigma(4)=1$. If we consider only X's odd actions and Y's even actions or its reverse, the payoff matrix corresponds to the matching pennies game. Interestingly, there are various types of Nash equilibrium depending on the interaction between these matching pennies games, i.e., the other elements of the payoff matrix, $u_{ij}$ for neither $j=\sigma(i)$ nor $i=\sigma(j)$. Indeed, Fig.~\ref{F04} shows three cases where Nash equilibrium exists (A) in the interior, (B) continuously, and (C) on the boundary.

We remark that the global behavior obtained from Thm.~\ref{thm_D} and~\ref{thm_H} are available even though the trajectories of the learning dynamics look complicated (see Fig.~\ref{F04}). Furthermore, Panels A and B show that Y's strategy converges to the equilibrium. The only exception is Panel C, but X's strategy converges to the equilibrium instead of Y's. In the following, we explain in detail this convergence for each panel.

\paragraph{Interior equilibrium:} First, Fig.~\ref{F04}A shows the case where the other elements of the payoff matrix are random numbers following the uniform distribution of $[-1,1]$. In this case, the payoff matrix is linearly independent. In mathematics, there exists no $\bs{a}=(a_{j})_{1\le j\le m_{\sf Y}}\in \mathbb{R}^{m_{\sf Y}}$ such that $\sumj a_{j}\bs{u}_{j}=\bs{0}$ other than $\bs{a}=\bs{0}$. Thus, there is a single Nash equilibrium in the interior of the strategy space. As in the matching pennies game, we observe that Y's strategy always converges to its equilibrium independent of X's and Y's initial strategies.

\paragraph{Continuous equilibrium:} Second, Fig.~\ref{F04}B shows the case where all the other elements take $0$. In this case, the payoff matrix is not linearly independent in two ways. Indeed, $\sumj a_{j}\bs{u}_{j}=\bs{0}$ for $\bs{a}=(0,1,0,1)$ and $(1,0,1,0)$. Thus, Nash equilibria exist continuously as $\bs{x}^{*}=r_{\sf X}(0,1/2,0,1/2)+(1-r_{\sf X})(1/2,0,1/2,0)$ and $\bs{y}^{*}=r_{\sf Y}(0,1/2,0,1/2)+(1-r_{\sf Y})(1/2,0,1/2,0)$ for all $0\le r_{\sf X}\le 1$ and $0\le r_{\sf Y}\le 1$. Even in such continuous equilibria, we observe that Y's strategy converges to one of the equilibria depending on X's and Y's initial strategies.

\paragraph{Boundary equilibrium:} Third, Fig.~\ref{F04}C shows the case where the other elements take $0$ in principle except for $u_{11}=-1$. In this case, the payoff matrix is not linearly independent in one way. Indeed, $\sumj a_{j}\bs{u}_{j}=\bs{0}$ for $\bs{a}=(0,1,0,1)$. The only Nash equilibrium exists on the boundary of strategy spaces, $\bs{x}^{*}=(0,1/2,0,1/2)$, and $\bs{y}^{*}=(1/2,0,1/2,0)$. As far as we observe our experiments, Y's strategy fails to converge the equilibrium when the payoff matrix is not linearly independent and is equipped with only the boundary Nash equilibrium. Nevertheless, we observe that X's strategy converges to the equilibrium instead of Y's. We find no case that neither X's nor Y's strategy converges to the equilibrium.

%%%%% Figure 04 %%%%%

\section{Conclusion}
This study considered the simplest situation of memory asymmetry between two players; only player X memorizes the other's previous action, while player Y cannot. We formulated their learning dynamics based on the replicator dynamics. Although the existence of memory complicates the dynamics, we captured the global behavior of the learning dynamics by introducing two new quantities. One is the conditional-sum divergence, which is an extension of the previous divergence to the case of reactive strategies. We proved that when X exploits Y, this conditional-sum divergence becomes smaller, meaning that their strategies converge to the Nash equilibrium. The other is a family of Lyapunov functions, meaning X's exploitability to Y. We proved that these Lyapunov functions monotonically increase, meaning that X learns to exploit Y with time. As a valid application of the combination of these two quantities, we suggested the global convergence to the Nash equilibrium. Theoretically, we proved the global convergence in the matching pennies game, the simplest game equipped with an interior Nash equilibrium. Our experiments further support that global convergence occurs in coupled matching pennies games, which can have various types of Nash equilibrium structures, such as interior equilibrium, continuous equilibrium, and boundary equilibrium. It is still a conjecture whether the learning dynamics with memory asymmetry converge to the Nash equilibrium in general zero-sum games. This study provides novel and valid indicators to analyze dynamics in learning in games with memories.

\begin{acks}
K. Ariu is supported by JSPS KAKENHI Grant No.~23K19986.
\end{acks}

%%%%%%%%%%%%%%%%%%%%%%%%%%%%%%%%%%%%%%%%%%%%%%%%%%%%%%%%%%%%%%%%%%%%%%%%

%%% The next two lines define, first, the bibliography style to be 
%%% applied, and, second, the bibliography file to be used.

\balance
%\bibliographystyle{ACM-Reference-Format}
%\bibliography{aamas25_fujimoto}
%%% -*-BibTeX-*-
%%% Do NOT edit. File created by BibTeX with style
%%% ACM-Reference-Format-Journals [18-Jan-2012].

% Appendix Appendix Appendix Appendix Appendix Appendix Appendix Appendix %
% Appendix Appendix Appendix Appendix Appendix Appendix Appendix Appendix %
% Appendix Appendix Appendix Appendix Appendix Appendix Appendix Appendix %
% Appendix Appendix Appendix Appendix Appendix Appendix Appendix Appendix %
% Appendix Appendix Appendix Appendix Appendix Appendix Appendix Appendix %

\newpage

\onecolumn

\appendix

\begin{center}
{\Large\bf Appendix}%{\large\bf Global Behavior of Learning Dynamics in Zero-Sum Games with Memory Asymmetry}
\end{center}

\section{Proofs}
\subsection{Proof of Thm.~\ref{thm_D}} \label{proof_thm_D}
\begin{proof}
The dynamics of $D(\bs{X},\bs{y})$ are calculated as
\begin{align}
    \dot{D}(\bs{X},\bs{y})=&-\sumi\sumj x_{i}^{*}\frac{\dot{x}_{i|j}}{x_{i|j}}-\sumj y_{j}^{*}\frac{\dot{y}_{j}}{y_{j}}
    \nonumber\\
    =&-\sumi\sumj x_{i}^{*}\left(\frac{\rmd u^{\st}}{\rmd x_{i|j}}-\sumi x_{i|j}\frac{\rmd u^{\st}}{\rmd x_{i|j}}\right)-\sumj y_{j}^{*}\left(\frac{\rmd u^{\st}}{\rmd y_{j}}-\sumj y_j\frac{\rmd u^{\st}}{\rmd y_{j}}\right)
    \nonumber\\
    =&-\sumi\sumj x_{i}^{*}y_{j}\left(\sumjp u_{ij'}y_{j'}-\sumi x_{i|j}\sumjp u_{ij'}y_{j'}\right)+\sumj y_{j}^{*}\left(\sumi u_{ij}x_{i}^{\st}-\sumj y_{j}\sumi u_{ij}x_{i}^{\st}\right)
    \nonumber\\
    &+\sumj y_{j}^{*}\left(\sumi x_{i|j}\sumjp u_{ij'}y_{j'}-\sumj y_{j}\sumi x_{i|j}\sumjp u_{ij'}y_{j'}\right)
    \nonumber\\
    =&-\underbrace{\sumi\sumj\sumjp x_{i}^{*}y_{j}u_{ij'}y_{j'}}_{=:{\rm (A)}=u^*}+\underbrace{\sumi\sumj\sumjp x_{i|j}y_{j}u_{ij'}y_{j'}}_{=:{\rm (B)}=u^{\st}}+\underbrace{\sumi\sumj y_{j}^{*}u_{ij}x_{i}^{\st}}_{=:{\rm (C)}=u^*}-\underbrace{\sumi\sumj y_{j}u_{ij}x_{i}^{\st}}_{=:{\rm (D)}=u^{\st}}
    \nonumber\\
    &+\underbrace{\sumi\sumj\sumjp y_{j}^{*}x_{i|j}u_{ij'}y_{j'}-\sumi\sumj\sumjp y_{j}x_{i|j}u_{ij'}y_{j'}}_{=:{\rm (E)}=-\rmd\bs{y}^{T}\bs{X}^{\rm T}\bs{U}\rmd\bs{y}}
    \nonumber\\
    =&-u^*+u^{\st}+u^*-u^{\st}-\rmd\bs{y}^{T}\bs{X}^{\rm T}\bs{U}\rmd\bs{y}
    \nonumber\\
    =&-\rmd\bs{y}^{T}\bs{X}^{\rm T}\bs{U}\rmd\bs{y}
    \nonumber\\
    (=&:D^{\dagger}(\bs{X};\rmd\bs{y})).
\end{align}
Here, we calculated (A)-(D) as
\begin{align}
    {\rm (A)}&=\sumi\sumj\sumjp x_{i}^{*}y_{j}u_{ij'}y_{j'}
    \nonumber\\
    &=\sumj\sumjp y_{j}y_{j'}(\sumi x_{i}^{*}u_{ij'})
    \nonumber\\
    &=\sumj\sumjp y_{j}y_{j'} u^{*}
    \nonumber\\
    &=u^{*}, \\
    {\rm (B)}&=\sumi\sumj\sumjp x_{i|j}y_{j}u_{ij'}y_{j'}
    \nonumber\\
    &=\sumi\sumjp (\sumj x_{i|j}y_{j})u_{ij'}y_{j'}
    \nonumber\\
    &=\sumi\sumjp x_{i}^{\st}u_{ij'}y_{j'}
    \nonumber\\
    &=u^{\st}, \\
    {\rm (C)}&=\sumi\sumj y_{j}^{*}u_{ij}x_{i}^{\st}
    \nonumber\\
    &=\sumi(\sumj y_{j}^{*}u_{ij})x_{i}^{\st}
    \nonumber\\
    &=u^{*}\sumi x_{i}^{\st}
    \nonumber\\
    &=u^{*}, \\
    {\rm (D)}&=\sumi\sumj y_{j}u_{ij}x_{i}^{\st}
    \nonumber\\
    &=u^{\st}, \\
    {\rm (E)}&=\sumi\sumj\sumjp y_{j}^{*}x_{i|j}u_{ij'}y_{j'}-\sumi\sumj\sumjp y_{j}x_{i|j}u_{ij'}y_{j'}
    \nonumber\\
    &=-\sumi\sumj\sumjp \underbrace{(y_{j}-y_{j}^{*})}_{=:\rmd y_{j}}x_{i|j}u_{ij'}\underbrace{y_{j'}}_{=\rmd y_{j'}+y_{j'}^{*}}
    \nonumber\\
    &=-\sumi\sumj\sumjp \rmd y_{j}x_{i|j}u_{ij'}\rmd y_{j'}-\sumi\sumj\sumjp \rmd y_{j}x_{i|j}u_{ij'}y_{j'}^{*}
    \nonumber\\
    &=-\sumi\sumj\sumjp \rmd y_{j}x_{i|j}u_{ij'}\rmd y_{j'}-u^{*}\sumi\sumj y_{j}x_{i|j}
    \nonumber\\
    &=-\sumi\sumj\sumjp \rmd y_{j}x_{i|j}u_{ij'}\rmd y_{j'}
    \nonumber\\
    &=-\rmd\bs{y}^{\rm T}\bs{X}^{\rm T}\bs{U}\rmd\bs{y}.
\end{align}
Here, note that $\rmd\bs{y}=\bs{y}-\bs{y}^{*}$ is zero-sum vector. Thus, if $\bs{X}^{\rm T}\bs{U}$ is positive definite for zero-sum vectors, $D^{\dagger}(\bs{X};\rmd\bs{y})\le 0$ holds independent of $\rmd\bs{y}$, meaning that $D(\bs{X},\bs{y})$ monotonically decreases with time.
\end{proof}

\subsection{Proof of Thm.~\ref{thm_H}} \label{proof_thm_H}
\begin{proof}
By Eqs.~\eqref{dotx} and~\eqref{gradx}, we obtain
\begin{comment}
\begin{align}
    \dot{H}(\bs{X},\bs{y})=&\sumi\sumip\sumj\delta_{i}u_{ij}\frac{\dot{x}_{i'|j}}{x_{i'|j}}\delta_{i'}
    \nonumber\\
    =&\sumi\sumip\sumj\delta_{i}u_{ij}\left(\frac{\rmd u^{\st}}{\rmd x_{i'|j}}-\sumipp x_{i''|j}\frac{\rmd u^{\st}}{\rmd x_{i''|j}}\right)\delta_{i'}
    \nonumber\\
    =&\sumi\sumip\sumj\delta_{i}u_{ij}y_{j}\delta_{i'}(\sumjp u_{i'j'}y_{j'}-\sumipp x_{i''|j}\sumjp u_{i''j'}y_{j'})
    \nonumber\\
    =&\sumi\sumip\sumj\sumjp\delta_{i}u_{ij}y_{j}\delta_{i'}u_{i'j'}y_{j'}-\sumi\sumipp\sumj\sumjp\underbrace{(\sumip\delta_{i'})}_{=0}\delta_{i}u_{ij}y_{j}x_{i''|j}u_{i''j'}y_{j'}
    \nonumber\\
    =&\sumi\sumip\sumj\sumjp\delta_{i}u_{ij}y_{j}\delta_{i'}u_{i'j'}y_{j'}
    \nonumber\\
    =&(\sumi\sumj \delta_{i}u_{ij}y_{j})^2
    \nonumber\\
    =&(\sumi\sumj \delta_{i}u_{ij}\rmd y_{j}+\underbrace{\sumi\sumj \delta_{i}u_{ij}y_{j}^{*}}_{=u^{*}\sumi \delta_{i}=0})^2
    \nonumber\\
    =&(\bs{\delta}^{\rm T}\bs{U}\rmd\bs{y})^2
    \nonumber\\
    (=&:H^{\dagger}(\bs{y};\bs{\delta})).
\end{align}
\end{comment}
\begin{align}
    \dot{H}(\bs{X},\bs{y})=&\sumi\sumip\sumj\delta_{i}u_{ij}\frac{\dot{x}_{i'|j}}{x_{i'|j}}\delta_{i'}
    \nonumber\\
    =&\sumi\sumip\sumj\delta_{i}u_{ij}\left(\frac{\rmd u^{\st}}{\rmd x_{i'|j}}-\sumipp x_{i''|j}\frac{\rmd u^{\st}}{\rmd x_{i''|j}}\right)\delta_{i'}
    \nonumber\\
    =&\sumi\sumip\sumj\delta_{i}u_{ij}\frac{\rmd u^{\st}}{\rmd x_{i'|j}}\delta_{i'}-\sumi\sumj \delta_{i}u_{ij}\sumipp x_{i''|j}\frac{\rmd u^{\st}}{\rmd x_{i''|j}}\underbrace{\sumip\delta_{i'}}_{=0}
    \nonumber\\
    =&\sumi\sumip\sumj\delta_{i}u_{ij}\frac{\rmd u^{\st}}{\rmd x_{i'|j}}\delta_{i'}
    \nonumber\\
    =&\sumi\sumip\sumj\sumjp\delta_{i}u_{ij}y_{j}\delta_{i'}u_{i'j'}y_{j'}
    \nonumber\\
    =&(\sumi\sumj \delta_{i}u_{ij}y_{j})^2
    \nonumber\\
    =&(\sumi\sumj \delta_{i}u_{ij}\rmd y_{j}+\underbrace{\sumi\sumj \delta_{i}u_{ij}y_{j}^{*}}_{=u^{*}\sumi \delta_{i}=0})^2
    \nonumber\\
    =&(\bs{\delta}^{\rm T}\bs{U}\rmd\bs{y})^2
    \nonumber\\
    (=&:H^{\dagger}(\bs{y};\bs{\delta})).
\end{align}
This equation shows that $H^{\dagger}(\bs{y};\bs{\delta})\ge 0$ always holds.

Furthermore, let us prove that $H^{\dagger}(\bs{y};\bs{\delta})=0$ for all $\bs{\delta}$ is equivalent to $\rmd\bs{y}=\bs{0}\Leftrightarrow \bs{y}=\bs{y}^{*}$. Here, we substitute $\bs{\delta}=\bs{x}-\bs{x}^{*}$ for $\bs{x}\in\Delta^{m_{\sf X}-1}$ (but no problem occurs even if we generally consider $\bs{x}\in\mathbb{R}^{m_{\sf X}}$ such that $\sumi x_{i}=1$), and then
\begin{align}
    \bs{\delta}^{\rm T}\bs{U}\rmd\bs{y}&=(\bs{x}-\bs{x}^{*})^{\rm T}\bs{U}(\bs{y}-\bs{y}^{*}) \\
    &=\underbrace{\bs{x}^{\rm T}\bs{U}\bs{y}}_{=u^{\st}(\bs{x},\bs{y})}-\underbrace{\bs{x}^{\rm T}\bs{U}\bs{y}^{*}}_{=u^{*}}-\underbrace{\bs{x}^{*\rm T}\bs{U}\bs{y}}_{=u^{*}}+\underbrace{\bs{x}^{*\rm T}\bs{U}\bs{y}^{*}}_{=u^{*}} \\
    &=u^{\st}(\bs{x},\bs{y})-u^{*}.
\end{align}
We obtain
\begin{align}
    H^{\dagger}(\bs{y};\bs{x}-\bs{x}^{*})=|u^{\st}(\bs{x},\bs{y})-u^{*}|^{2}.
\end{align}
We use the proof by contradiction. Assume $\rmd\hat{\bs{y}}\neq\bs{0}\Leftrightarrow \hat{\bs{y}}\neq\bs{y}^{*}$ such that $u^{\st}(\bs{x},\hat{\bs{y}})=u^{*}$ for all $\bs{x}$, and then we obtain for $\bs{x}=\bs{x}^{*}$
\begin{align}
    &u^{\st}(\bs{x}^{*},\hat{\bs{y}})=u^{*}=\max_{\bs{y}}u^{\st}(\bs{x}^{*},\bs{y}) \\
    &\Leftrightarrow \hat{\bs{y}}=\arg\max_{\bs{y}}u^{\st}(\bs{x}^{*},\bs{y})=\bs{y}^{*},
\end{align}
which contradicts $\hat{\bs{y}}\neq\bs{y}^{*}$. We proved $H^{\dagger}(\bs{y};\bs{\delta})=0$ for all $\bs{\delta}$ is equivalent to $\rmd\bs{y}=\bs{0}\Leftrightarrow \bs{y}=\bs{y}^{*}$.
\end{proof}

\subsection{Proof of Cor.~\ref{cor_global}} \label{proof_cor_global}
\begin{proof}
Now, by using the special properties of two-action games, let us simplify the notation of $\bs{X}:=\{x_{i|j}\}_{i,j}$, $\bs{y}:=\{y_{j}\}_{j}$, and $\bs{\delta}:=\{\delta_{i}\}_{i}$. From the definition, $\sumi x_{i|j}=1\Leftrightarrow x_{1|j}+x_{2|j}=1$ for all $j\in\{1,2\}$ holds, and thus we simply denote it by $x_{1|j}=:x_{j}$ and $x_{2|j}=1-x_{j}$. In a similar manner, because $y_{1}+y_{2}=1$ holds, we simply denote it by $y_{1}:=y$ and $y_{2}=1-y$. Furthermore, since $\sumi \delta_{i}=0$ holds, we simply denote it by $\delta_{1}=:\delta$ and $\delta_{2}=-\delta$. In two-action games, we can write
\begin{align}
    \bs{\delta}^{\rm T}&=\begin{pmatrix}
        \delta_1 & \delta_2
    \end{pmatrix}=\begin{pmatrix}
        \delta & -\delta
    \end{pmatrix},\\
    \log\bs{X}&=\begin{pmatrix}
        \log x_{1|1} & \log x_{1|2} \\
        \log x_{2|1} & \log x_{2|2} \\
    \end{pmatrix}=\begin{pmatrix}
        \log x_{1} & \log x_{2} \\
        \log (1-x_{1}) & \log (1-x_{2}) \\
    \end{pmatrix}.
\end{align}
Using these notations, we calculate $H(\bs{X};\bs{\delta})$ as
\begin{align}
    H(\bs{X};\bs{\delta})&=\bs{\delta}^{\rm T}\bs{U}\log\bs{X}^{\rm T}\bs{\delta}
    \nonumber\\
    &=\begin{pmatrix}
        \delta & -\delta
    \end{pmatrix}\begin{pmatrix}
        1 & -1 \\
        -1 & 1 \\
    \end{pmatrix}\begin{pmatrix}
        \log x_{1} & \log (1-x_{1}) \\
        \log x_{2} & \log (1-x_{2}) \\
    \end{pmatrix}\begin{pmatrix}
        \delta \\
        -\delta \\
    \end{pmatrix}
    \nonumber\\
    &=2\delta^2\left(\log\frac{x_1}{1-x_1}+\log\frac{x_2}{1-x_2}\right)
    \nonumber\\
    &=2\delta^2(q_1-q_2).
\end{align}
Here, in the final line, we used $q_j:=\log x_j-\log (1-x_j)\Leftrightarrow x_j=\exp(q_j)/(1+\exp(q_j))$. The dynamics of $H(\bs{X};\bs{\delta})$ is also calculated as
\begin{align}
    H^{\dagger}(\bs{y};\bs{\delta})&=(\bs{\delta}^{\rm T}\bs{U}\rmd\bs{y})^2
    \nonumber\\
    &=\left(\begin{pmatrix}
        \delta & -\delta
    \end{pmatrix}\begin{pmatrix}
        1 & -1 \\
        -1 & 1 \\
    \end{pmatrix}\begin{pmatrix}
        y-y^* \\
        -y+y^* \\
    \end{pmatrix}\right)^2
    \nonumber\\
    &=16\delta^2(y-y^*)^2.
\end{align}
This equation shows that for $\delta\neq 0$, $H^{\dagger}(\bs{y};\bs{\delta})=0$ holds if and only if $y=y^*$. Thus, after sufficiently long time passes, $\bs{y}=\bs{y}^*$ holds or $H(\bs{X};\bs{\delta})$ diverges to infinity. If so, $H(\bs{X};\bs{\delta})>0$ is satisfied and is equivalent to $q_1>q_2$, which is also equivalent to $x_1>x_2$ from the definition. If $x_1>x_2$ holds, $D^{\dagger}(\bs{X};\rmd\bs{y})<0$ is satisfied because we can derive
\begin{align}
    D^{\dagger}(\bs{X};\rmd\bs{y})&=-\rmd\bs{y}^{\rm T}\bs{X}^{\rm T}\bs{U}\rmd\bs{y}
    \nonumber\\
    &=-\begin{pmatrix}
        y-y^{*} & -y+y^{*}
    \end{pmatrix}\begin{pmatrix}
        x_{1} & 1-x_{1} \\
        x_{2} & 1-x_{2} \\
    \end{pmatrix}\begin{pmatrix}
        u_{11} & u_{12} \\
        u_{21} & u_{22} \\
    \end{pmatrix}\begin{pmatrix}
        y-y^{*} \\
        -y+y^{*} \\
    \end{pmatrix}
    \nonumber\\
    &=-(y-y^{*})^2(x_{1}-x_{2})(u_{11}-u_{12}-u_{21}+u_{22}).
\end{align}
Since $D(\bs{X},\bs{y})$ has its lower bound, $D^{\dagger}(\bs{X};\rmd\bs{y})=0\Leftrightarrow y=y^{*}$ holds after sufficiently long time passes. Thus, it was proved that Y's strategy globally converges to its Nash equilibrium strategy.
\end{proof}

\section{Connection with ordinary definiteness} \label{app_definite}
This study considered positive definiteness with some constraint, i.e., for zero-sum vectors. However, this constrained positive definiteness is related to the ordinarily positive definiteness without constraint, introduced as follows (describing ``tilde'' for distinction).

\begin{definition}[Positive definiteness]
A square matrix $\tilde{\bs{M}}$ is positive definite when for all vectors $\tilde{\bs{\delta}}\neq\bs{0}$, $\tilde{\bs{\delta}}\cdot(\tilde{\bs{M}}\tilde{\bs{\delta}})>0$.
\end{definition}

Now, we show how the positive definiteness for zero-sum vectors is tied to the ordinary positive definiteness. First, freely choose and fix $\hat{k}\in\{1,\cdots,m\}$. From any $m\times m$ matrix $\bs{M}=(m_{kk'})_{k,k'}$, we define a $(m-1)\times(m-1)$ matrix $\tilde{\bs{M}}:=(\tilde{m}_{kk'})_{k\neq\hat{k},k'\neq\hat{k}}$ by
\begin{align}
    \tilde{m}_{kk'}&:=m_{kk'}-m_{k\hat{k}}-m_{\hat{k}k'}+m_{\hat{k}\hat{k}}.
\end{align}
Then, the positive definiteness of $\bs{M}$ for zero-sum vectors is equivalent to the positive definiteness of $\tilde{\bs{M}}$ in the ordinary definition as follows.

\begin{theorem}[Equivalence to positive definiteness]
If and only if $m\times m$ matrix $\bs{M}$ is positive definite for zero-sum vectors, $(m-1)\times(m-1)$ matrix $\tilde{\bs{M}}$ is positive definite.
\end{theorem}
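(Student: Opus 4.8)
The plan is to reduce both definiteness conditions to one and the same quadratic form by setting up an explicit linear bijection between the space of zero-sum vectors in $\mathbb{R}^{m}$ and $\mathbb{R}^{m-1}$. Fix $\hat{k}$ as in the definition of $\tilde{\bs{M}}$. To a vector $\tilde{\bs{\delta}}=(\tilde{\delta}_{k})_{k\neq\hat{k}}\in\mathbb{R}^{m-1}$ I associate $\bs{\delta}=(\delta_{k})_{1\le k\le m}\in\mathbb{R}^{m}$ defined by $\delta_{k}:=\tilde{\delta}_{k}$ for $k\neq\hat{k}$ and $\delta_{\hat{k}}:=-\sum_{k\neq\hat{k}}\tilde{\delta}_{k}$. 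This map is linear, its image is exactly $\{\bs{\delta}\in\mathbb{R}^{m}:\sum_{k}\delta_{k}=0\}$, its inverse merely forgets the $\hat{k}$-th coordinate (which on the zero-sum subspace is determined by the others), and $\bs{\delta}=\bs{0}$ holds if and only if $\tilde{\bs{\delta}}=\bs{0}$.

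The heart of the proof is the identity $\bs{\delta}\cdot(\bs{M}\bs{\delta})=\tilde{\bs{\delta}}\cdot(\tilde{\bs{M}}\tilde{\bs{\delta}})$ for every corresponding pair $(\bs{\delta},\tilde{\bs{\delta}})$. I would prove it by expanding $\bs{\delta}\cdot(\bs{M}\bs{\delta})=\sum_{k,k'}\delta_{k}m_{kk'}\delta_{k'}$, splitting the double sum into the four blocks according to whether $k=\hat{k}$ and whether $k'=\hat{k}$, and then substituting $\delta_{\hat{k}}=-\sum_{l\neq\hat{k}}\delta_{l}$ wherever $\delta_{\hat{k}}$ occurs. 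After relabelling dummy indices, the $k=\hat{k}$ block becomes $-\sum_{k,k'\neq\hat{k}}\delta_{k}m_{\hat{k}k'}\delta_{k'}$, the $k'=\hat{k}$ block becomes $-\sum_{k,k'\neq\hat{k}}\delta_{k}m_{k\hat{k}}\delta_{k'}$, and the $(\hat{k},\hat{k})$ term becomes $+\sum_{k,k'\neq\hat{k}}\delta_{k}m_{\hat{k}\hat{k}}\delta_{k'}$; adding these to the $k,k'\neq\hat{k}$ block and collecting coefficients gives $\sum_{k,k'\neq\hat{k}}\delta_{k}(m_{kk'}-m_{k\hat{k}}-m_{\hat{k}k'}+m_{\hat{k}\hat{k}})\delta_{k'}=\sum_{k,k'\neq\hat{k}}\delta_{k}\tilde{m}_{kk'}\delta_{k'}=\tilde{\bs{\delta}}\cdot(\tilde{\bs{M}}\tilde{\bs{\delta}})$. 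Note that no symmetry of $\bs{M}$ is needed, since both definiteness notions are phrased purely via quadratic forms.

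Combining the two facts yields the theorem as a chain of equivalences: $\bs{M}$ is positive definite for zero-sum vectors precisely when $\bs{\delta}\cdot(\bs{M}\bs{\delta})$ has the required sign for every nonzero zero-sum $\bs{\delta}$, which by the bijection and the identity is equivalent to $\tilde{\bs{\delta}}\cdot(\tilde{\bs{M}}\tilde{\bs{\delta}})$ having that sign for every nonzero $\tilde{\bs{\delta}}\in\mathbb{R}^{m-1}$, i.e. to $\tilde{\bs{M}}$ being positive definite; both directions are obtained at once. The only points needing care are the index bookkeeping in the block expansion and confirming that the correspondence $\tilde{\bs{\delta}}\leftrightarrow\bs{\delta}$ is a bijection onto the zero-sum subspace sending $\bs{0}$ only to $\bs{0}$ (so that the two universal quantifiers match up exactly); neither is a genuine obstacle, so once the correspondence is fixed the result is essentially a one-line computation.
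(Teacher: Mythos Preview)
Your proposal is correct and follows essentially the same approach as the paper: set up the bijection $\tilde{\bs{\delta}}\leftrightarrow\bs{\delta}$ via $\delta_{\hat{k}}=-\sum_{k\neq\hat{k}}\tilde{\delta}_{k}$, then split $\sum_{k,k'}\delta_{k}m_{kk'}\delta_{k'}$ into the four blocks by whether $k,k'$ equal $\hat{k}$, substitute, and collect to obtain $\tilde{\bs{\delta}}\cdot(\tilde{\bs{M}}\tilde{\bs{\delta}})$. The paper's proof is line-for-line the same computation you describe.
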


\begin{proof}
For any $(m-1)$-dimensional vector $\tilde{\bs{\delta}}=(\delta_{k})_{k\neq\hat{k}}$, we define a $m$-dimensional zero-sum vector $\bs{\delta}$ by
\begin{align}
    \delta_{k}:=\begin{cases}
        -\Sigma_{k\neq\hat{k}} \tilde{\delta}_{k} & (k=\hat{k}) \\
        \tilde{\delta}_{k} & (k\neq\hat{k}) \\
    \end{cases}. \label{delta_tildelta}
\end{align}
Conversely, it is trivially possible to define $\tilde{\bs{\delta}}$ from any $\bs{\delta}$. Thus, there is one-to-one correspondence between the $m$-dimensional zero-sum vector $\bs{\delta}$ and the $(m-1)$-dimensional (general-sum) vector $\tilde{\bs{\delta}}$. Now, we can prove $\bs{\delta}\cdot(\bs{M}\bs{\delta})=\tilde{\bs{\delta}}\cdot(\tilde{\bs{M}}\tilde{\bs{\delta}})$ as follows.
\begin{align}
    \bs{\delta}\cdot(\bs{M}\bs{\delta})&=\sum_{k}\sum_{k'}\delta_{k}m_{kk'}\delta_{k'}
    \nonumber\\
    &=\sum_{k\neq\hat{k}}\sum_{k'\neq\hat{k}}\delta_{k}m_{kk'}\delta_{k'}+\sum_{k\neq\hat{k}}\delta_{k}m_{k\hat{k}}\delta_{\hat{k}}+\sum_{k'\neq\hat{k}}\delta_{\hat{k}}m_{\hat{k}k'}\delta_{k'}+\delta_{\hat{k}}m_{\hat{k}\hat{k}}\delta_{\hat{k}}
    \nonumber\\
    &=\sum_{k\neq\hat{k}}\sum_{k'\neq\hat{k}}\tilde{\delta}_{k}m_{kk'}\tilde{\delta}_{k'}-\sum_{k\neq\hat{k}}\sum_{k'\neq\hat{k}}\tilde{\delta}_{k}m_{k\hat{k}}\tilde{\delta}_{k'}-\sum_{k\neq\hat{k}}\sum_{k'\neq\hat{k}}\tilde{\delta}_{k}m_{\hat{k}k'}\tilde{\delta}_{k'}+\sum_{k\neq\hat{k}}\sum_{k'\neq\hat{k}}\tilde{\delta}_{k}m_{\hat{k}\hat{k}}\tilde{\delta}_{k'}
    \nonumber\\
    &=\sum_{k\neq\hat{k}}\sum_{k'\neq\hat{k}}\tilde{\delta}_{k}(m_{kk'}-m_{k\hat{k}}-m_{\hat{k}k'}+m_{\hat{k}\hat{k}})\tilde{\delta}_{k'}
    \nonumber\\
    &=\sum_{k\neq\hat{k}}\sum_{k'\neq\hat{k}}\tilde{\delta}_{k}\tilde{m}_{kk'}\tilde{\delta}_{k'}
    \nonumber\\
    &=\tilde{\bs{\delta}}\cdot(\tilde{\bs{M}}\tilde{\bs{\delta}}).
\end{align}
Here, in the third equal sign, we used Eq.~\eqref{delta_tildelta}. Thus, if and only if $\bs{M}$ is positive definite for zero-sum vectors (i.e., $\bs{\delta}\cdot(\bs{M}\bs{\delta})>0$), $\tilde{\bs{M}}$ is positive definite (i.e., $\tilde{\bs{\delta}}\cdot(\tilde{\bs{M}}\tilde{\bs{\delta}})>0$).
\end{proof}

\section{Connection to the classical divergence} \label{app_divergence}
This section is dedicated to understanding the connection between our conditional-sum divergence $D(\bs{X},\bs{y})$ and the classical divergence $D_{\rm c}(\bs{x},\bs{y})$. Here, recall that X's reactive strategy consists of multiple vectors $\bs{X}=(\bs{x}_{j})_{1\le j\le m_{\sf Y}}$, where $\bs{x}_{j}$ is the vector of the probability distribution of its action choice when the other chose $j$-th action in the previous round. On the other hand, in the classical divergence, X's mixed strategy $\bs{x}$ is the single vector of the probability distribution of its action choice independent of the other's previous choice. Thus, if the reactive strategy of $\bs{X}$ satisfies $\bs{x}_{j}=\bs{x}$ for all $j$, it corresponds to the mixed strategy of $\bs{x}$. Then, $\dot{D}(\bs{X},\bs{y})=\dot{D}_{\rm c}(\bs{x},\bs{y})$ holds, as the following theorem shows.

\begin{theorem}[Connection with the classical total divergence]
When $\bs{X}$ satisfies $\bs{x}_{j}=\bs{x}$ for all $j$, $\dot{D}(\bs{X},\bs{y})=\dot{D}_{\rm c}(\bs{x},\bs{y})=0$ holds.
\end{theorem}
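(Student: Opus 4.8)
The plan is to prove the two equalities separately, each by reusing the term cancellations already carried out in Appendix~\ref{proof_thm_D}. As a preliminary observation, note that when $\bs{x}_{j}=\bs{x}$ for every $j$ we get $x_{i}^{\st}=\sumj x_{i|j}y_{j}=x_{i}\sumj y_{j}=x_{i}$, i.e.\ $\bs{x}^{\st}=\bs{x}$, so the classical (memory-less) replicator dynamics on $\bs{x}$ is exactly the flow along which $\dot{D}_{\rm c}$ is evaluated, which makes the comparison well posed. The statement is to be read instantaneously: the locus $\{\bs{X}:\bs{x}_{j}=\bs{x}\ \forall j\}$ is \emph{not} forward-invariant under Eq.~\eqref{dotx} (the velocity of $x_{i|j}$ carries a factor $y_{j}$ by Eq.~\eqref{gradx}), so all we claim---and all we need---is that the two velocities coincide, and are zero, at such a configuration.

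For $\dot{D}(\bs{X},\bs{y})=0$ I would invoke the closed form $\dot{D}(\bs{X},\bs{y})=-\rmd\bs{y}^{\rm T}\bs{X}^{\rm T}\bs{U}\,\rmd\bs{y}$ established in Appendix~\ref{proof_thm_D} for arbitrary $\bs{X}$. The $(j,j')$ entry of $\bs{X}^{\rm T}\bs{U}$ is $\sumi x_{i|j}u_{ij'}$, which under the hypothesis equals $w_{j'}:=\sumi x_{i}u_{ij'}$ and is therefore independent of $j$; hence $\bs{X}^{\rm T}\bs{U}=\bs{1}\,\bs{w}^{\rm T}$ with $\bs{1}=(1,\dots,1)^{\rm T}$, and
\begin{align}
    \rmd\bs{y}^{\rm T}\bs{X}^{\rm T}\bs{U}\,\rmd\bs{y}=(\bs{1}^{\rm T}\rmd\bs{y})(\bs{w}^{\rm T}\rmd\bs{y})=\Big(\sumj (y_{j}-y_{j}^{*})\Big)(\bs{w}^{\rm T}\rmd\bs{y})=0,
\end{align}
because $\rmd\bs{y}=\bs{y}-\bs{y}^{*}$ is a zero-sum vector. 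So $\dot{D}(\bs{X},\bs{y})=0$. Equivalently, the quadratic form $\bs{\delta}^{\rm T}(\bs{X}^{\rm T}\bs{U})\bs{\delta}$ vanishes on all zero-sum $\bs{\delta}$, the borderline case separating the positive- and negative-definite regimes of Thm.~\ref{thm_D}.

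For $\dot{D}_{\rm c}(\bs{x},\bs{y})=0$ I would replay the (A)--(D) computation of Appendix~\ref{proof_thm_D} with the memory term (E) absent: in the memory-less dynamics $\rmd u^{\st}/\rmd x_{i}=\partial u^{\st}/\partial x_{i}=\sumj u_{ij}y_{j}$ contributes nothing extra. Then $\dot{D}_{\rm c}(\bs{x},\bs{y})=-\sumi x_{i}^{*}\dot{x}_{i}/x_{i}-\sumj y_{j}^{*}\dot{y}_{j}/y_{j}$ expands into four terms that collapse to $-u^{*}+u^{\st}+u^{*}-u^{\st}=0$ upon using $\sumi u_{ij}x_{i}^{*}=\sumj u_{ij}y_{j}^{*}=u^{*}$ (Eq.~\eqref{same_payoff}) together with $\sumi x_{i}=\sumj y_{j}=1$; this is the classical conservation of the total Kullback--Leibler divergence under replicator dynamics in zero-sum games. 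Combining the two parts gives $\dot{D}(\bs{X},\bs{y})=\dot{D}_{\rm c}(\bs{x},\bs{y})=0$. I anticipate no genuine obstacle: the only point demanding care is that the two derivatives live on different flows, which is why pinning down $\bs{x}^{\st}=\bs{x}$ at the outset matters; everything else is bookkeeping already present in Appendix~\ref{proof_thm_D}.
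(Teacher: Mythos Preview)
Your proof is correct but takes a genuinely different route from the paper. The paper argues the equality $\dot{D}(\bs{X},\bs{y})=\dot{D}_{\rm c}(\bs{x},\bs{y})$ \emph{directly}, by splitting both sides into their X- and Y-contributions: it notes that the Y-contributions coincide (the extra term $\sumi x_{i|j}\sumjp u_{ij'}y_{j'}$ in Eq.~\eqref{grady} becomes $j$-independent when $x_{i|j}=x_i$ and hence cancels in the replicator normalisation), and then computes $\sumj\dot{D}_{\rm KL}(\bs{x}^{*}\|\bs{x}_{j})$ under Eq.~\eqref{dotx}, uses $\sumj y_j=1$ and $x_{i|j}=x_i$ to collapse the $j$-sum, and recognises the result as $\dot{D}_{\rm KL}(\bs{x}^{*}\|\bs{x})$ under the memory-less replicator. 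The vanishing then comes from the classical conservation law. You instead prove $\dot{D}=0$ and $\dot{D}_{\rm c}=0$ separately: for $\dot{D}$ you reuse the closed form $-\rmd\bs{y}^{\rm T}\bs{X}^{\rm T}\bs{U}\,\rmd\bs{y}$ from Appendix~\ref{proof_thm_D} and observe that $\bs{X}^{\rm T}\bs{U}=\bs{1}\bs{w}^{\rm T}$ is rank one under the hypothesis, so the quadratic form annihilates zero-sum vectors; for $\dot{D}_{\rm c}$ you redo the (A)--(D) cancellation without (E). Your approach is slightly more modular---it leverages Thm.~\ref{thm_D}'s formula rather than redoing its derivation---and the rank-one observation gives a crisp geometric reason why the memory-less locus sits exactly on the boundary between the positive- and negative-definite regimes. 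The paper's approach, by contrast, makes the \emph{reduction} $D\to D_{\rm c}$ explicit at the level of each KL term, which is arguably closer to the spirit of ``connection with the classical divergence.'' Both are valid and of comparable length.
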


\begin{proof}
Let us prove that $\dot{D}=\sumj \dot{D}_{\rm KL}(\bs{x}^{*}\|\bs{x}_{j})+\dot{D}_{\rm KL}(\bs{y}^{*}\|\bs{y})$ and $\dot{D}_{\rm c}=\dot{D}_{\rm KL}(\bs{x}^{*}\|\bs{x})+\dot{D}_{\rm KL}(\bs{y}^{*}\|\bs{y})$ are equal. First, the time evolution of Y's divergence, i.e., $\dot{D}_{\rm KL}(\bs{y}^{*}\|\bs{y})$, is trivially equal between these two equations. Thus, we prove $\sumj \dot{D}_{\rm KL}(\bs{x}^{*}\|\bs{x}_{j})=\dot{D}_{\rm KL}(\bs{x}^{*}\|\bs{x})$ below.
\begin{align}
    \sumj\dot{D}_{\rm KL}(\bs{x}^{*}\|\bs{x}_{j})&=-\sumj\sumi x_{i}^{*}\frac{\dot{x}_{i|j}}{x_{i|j}}
    \nonumber\\
    &=-\sumj\sumi x_{i}^{*}y_{j}(\sumjp u_{ij'}y_{j'}-\sumi \underbrace{x_{i|j}}_{=x_{i}}\sumjp u_{ij'}y_{j'})
    \nonumber\\
    &=-\sumi x_{i}^{*}(\sumjp u_{ij'}y_{j'}-\sumi x_{i}\sumjp u_{ij'}y_{j'})
    \nonumber\\
    &=\dot{D}_{\rm KL}(\bs{x}^{*}\|\bs{x}).
\end{align}
In the final equal sign, we used the replicator dynamics of X's strategy for games without memories as
\begin{align}
    \dot{x}_{i}&=x_{i}\left(\frac{\rmd u_{\st}(\bs{x},\bs{y})}{\rmd x_i}-\sumi x_i\frac{\rmd u_{\st}(\bs{x},\bs{y})}{\rmd x_i}\right)
    \nonumber\\
    &=x_{i}(\sumj u_{ij}y_{j}-\sumi x_{i}\sumj u_{ij}y_{j})
\end{align}
Thus, we proved $\dot{D}=\dot{D}_{\rm c}$.
\end{proof}

In this proof, we proved $\dot{D}(\bs{X},\bs{y})=\dot{D}_{\rm c}(\bs{x},\bs{y})$ separately for each divergence of X's strategy and Y's strategy.

Furthermore, note that $\dot{D}(\bs{X},\bs{y})=\dot{D}_{\rm c}(\bs{x},\bs{y})$ holds only for a moment. Because the learning dynamics give $\dot{\bs{x}}_{j}\neq \dot{\bs{x}}$, $\bs{x}_{j}$ gradually differs from $\bs{x}$. In conclusion, $\dot{D}(\bs{X},\bs{y})=\dot{D}_{\rm c}(\bs{x},\bs{y})$ becomes unsatisfied immediately. (However, $\dot{D}_{\rm c}(\bs{x},\bs{y})=0$ continues to hold.)

\section{Extension to other learning algorithms} \label{app_extension}
In the main text, we focus only on the replicator dynamics. However, our findings are also applicable to the gradient descent-ascent algorithm. In the interior of the strategy space, the learning dynamics of the gradient descent-ascent are described as
\begin{align}
    \dot{x}_{i|j}=+\left(\frac{\rmd u^{\st}}{\rmd x_{i|j}}-\sumi \frac{1}{m_{\sf X}}\frac{\rmd u^{\st}}{\rmd x_{i|j}}\right),\quad \dot{y}_{j}=-\left(\frac{\rmd u^{\st}}{\rmd y_{j}}-\sumj \frac{1}{m_{\sf Y}}\frac{\rmd u^{\st}}{\rmd y_{j}}\right),
\end{align}
which correspond to Eqs.~\eqref{dotx} and~\eqref{doty}, respectively.

To capture the global behavior in the learning dynamics based on the gradient descent-ascent, we have to make small changes in $D(\bs{X},\bs{y})$ and $H(\bs{X};\bs{\delta})$ as
\begin{align}
    D(\bs{X},\bs{y})&:=\sumj D_{2}(\bs{x}^{*},\bs{x}_{j})+D_{2}(\bs{y}^{*},\bs{y}),\quad D_{2}(\bs{p}^{*},\bs{p}):=\frac{1}{2}\|\bs{p}-\bs{p}^{*}\|^2, \\
    H(\bs{X};\bs{\delta})&:=\bs{\delta}^{\rm T}\bs{U}\bs{X}^{\rm T}\bs{\delta},
\end{align}
which correspond to Eqs.~\eqref{D_function} and~\eqref{H_function}, respectively.

This $D(\bs{X},\bs{y})$ satisfies Thm.~\ref{thm_D} as follows.
\begin{align}
    \dot{D}(\bs{X},\bs{y})&=\sumi\sumj\dot{x}_{i|j}(x_{i|j}-x_{i}^{*})+\sumj\dot{y}_{j}(y_{j}-y_{j}^{*})
    \nonumber\\
    &=\sumi\sumj (x_{i|j}-x_{i}^{*})\left(\frac{\rmd u^{\st}}{\rmd x_{i|j}}+-\sumi\frac{1}{m_{\sf X}}\frac{\rmd u^{\st}}{\rmd x_{i|j}}\right)-\sumj (y_{j}-y_{j}^{*})\left(\frac{\rmd u^{\st}}{\rmd  y_{j}}-\sumj\frac{1}{m_{\sf Y}}\frac{\rmd u^{\st}}{\rmd y_{j}}\right)
    \nonumber\\
    &=\sumi\sumj (x_{i|j}-x_{i}^{*})\frac{\rmd u^{\st}}{\rmd x_{i|j}}-\sumj (y_{j}-y_{j}^{*})\frac{\rmd u^{\st}}{\rmd y_{j}}
    \nonumber\\
    &=\underbrace{\sumi\sumj (x_{i|j}-x_{i}^{*})y_{j}\sumjp u_{ij'}y_{j'}}_{=({\rm B})-({\rm A})=u^{\st}-u^{*}}-\underbrace{\sumj (y_{j}-y_{j}^{*})\sumi u_{ij}x_{i}^{\st}}_{=:({\rm D})-({\rm C})=u^{\st}-u^{*}}-\underbrace{\sumj (y_{j}-y_{j}^{*})\sumi x_{i|j}\sumjp u_{ij'}y_{j'}}_{=({\rm E})=\rmd\bs{y}^{T}\bs{X}^{\rm T}\bs{U}\rmd\bs{y}}
    \nonumber\\
    &=-\rmd\bs{y}^{T}\bs{X}^{\rm T}\bs{U}\rmd\bs{y}
    \nonumber\\
    &=D^{\dagger}(\bs{X};\rmd\bs{y}).
\end{align}
Here, we used the terms $({\rm A})$-$({\rm E})$ in Appendix~\ref{proof_thm_D}.

Furthermore, this $H(\bs{X};\bs{\delta})$ also satisfies Thm.~\ref{thm_H} as follows.
\begin{align}
    \dot{H}(\bs{X};\bs{\delta})&=\sumi\sumip\sumj \delta_{i}u_{ij}\dot{x}_{i'|j}\delta_{i'}
    \nonumber\\
    &=\sumi\sumip\sumj \delta_{i}u_{ij}\left(\frac{\rmd u^{\st}}{\rmd x_{i'|j}}+-\sumipp\frac{1}{m_{\sf X}}\frac{\rmd u^{\st}}{\rmd x_{i''|j}}\right)\delta_{i'}
    \nonumber\\
    &=\sumi\sumip\sumj \delta_{i}u_{ij}\frac{\rmd u^{\st}}{\rmd x_{i'|j}}\delta_{i'}+\sumi\sumj\delta_{i}u_{ij}\sumipp\frac{1}{m_{\sf X}}\frac{\rmd u^{\st}}{\rmd x_{i''|j}}\underbrace{\sumip\delta_{i'}}_{=0}
    \nonumber\\
    &=\sumi\sumip\sumj \delta_{i}u_{ij}\frac{\rmd u^{\st}}{\rmd x_{i'|j}}\delta_{i'}
    \nonumber\\
    &=H^{\dagger}(\bs{y};\bs{\delta}).
\end{align}
In the final line, we used the same calculation as that of Appendix~\ref{proof_thm_H}.

Both the replicator dynamics and the gradient descent-ascent are included in the Follow the Regularized Leaders. A promising future study is to capture the global behavior in learning dynamics based on general Follow the Regularized Leaders algorithms with memories.

%%%%%%%%%%%%%%%%%%%%%%%%%%%%%%%%%%%%%%%%%%%%%%%%%%%%%%%%%%%%%%%%%%%%%%%%

\end{document}